\documentclass[12pt,a4paper]{article}
\usepackage[top=0.8in, bottom=0.8in, left=0.8in, right=0.8in]{geometry}
\usepackage{setspace}
\usepackage{url} % not crucial - just used below for the URL
\usepackage{times}
\usepackage{paralist}
\usepackage{natbib, xcolor, dsfont}
\usepackage{subcaption}
\usepackage{comment}

\usepackage[font=small]{caption}
\definecolor{darkblue}{rgb}{0,0,.6}
\usepackage[pdftex,colorlinks=true,unicode]{hyperref}
\hypersetup{citecolor=darkblue,linkcolor=darkblue,urlcolor=darkblue}
\usepackage{amssymb}
\usepackage{latexsym}
\usepackage{amsfonts}
\usepackage{amsthm}
\usepackage{amsmath, microtype}
\usepackage{graphicx}
\usepackage{enumitem}
\usepackage{bbm}
\newcommand\figref{Figure~\ref}
\newcommand{\convergeInP}{\stackrel{i.p.}{\longrightarrow}}

\usepackage[normalem]{ulem}

\usepackage[font={small,it}, labelfont={bf}]{caption}

\newtheorem{definition}{Definition}
\newtheorem{theorem}{Theorem}

\newtheorem{innercustomgeneric}{\customgenericname}
\providecommand{\customgenericname}{}
\newcommand{\newcustomtheorem}[2]{%
  \newenvironment{#1}[1]
  {%
   \renewcommand\customgenericname{#2}%
   \renewcommand\theinnercustomgeneric{##1}%
   \innercustomgeneric
  }
  {\endinnercustomgeneric}
}
\newcustomtheorem{customthm}{Theorem}
\newcustomtheorem{customlemma}{Lemma}

\theoremstyle{definition}

\newtheorem{assumption}{Assumption}
\graphicspath{{plots/}}
\usepackage{titlesec}
\titleformat{\section}{\normalfont\Large\bfseries}{\thesection}{1em}{}

\usepackage{graphics}
\usepackage{authblk}
\newcommand{\Rlogo}{\protect\includegraphics[height=1.8ex,keepaspectratio]{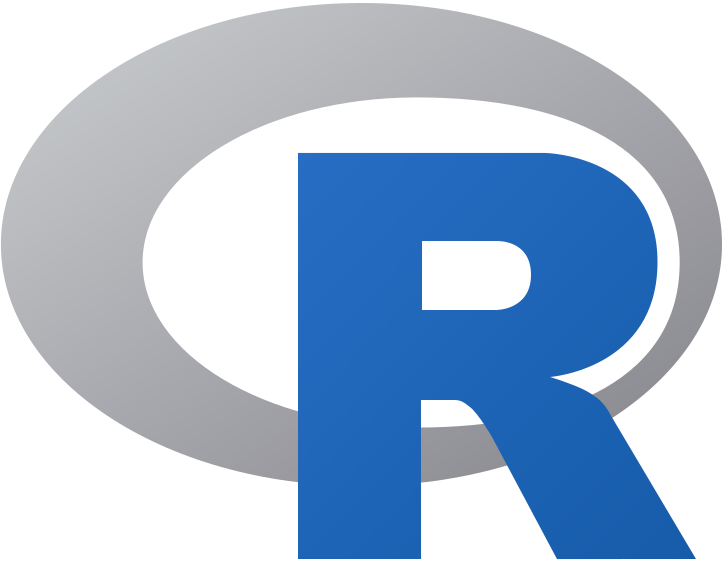}}

\begin{document}

\title{\Large\bf Double Descent in High-dimensional Linear Discriminant Analysis}
\author[a]{Yonghe Lu}
\author[b]{Han Lin Shang}
\author[a]{Yanrong Yang}
\author[a]{Kehan Zhao}
	
\affil[a]{The Australian National University}
\affil[b]{Macquarie University}

\date{}

\maketitle
\onehalfspacing

\begin{abstract}
The double descent phenomenon observed in deep neural networks has called into question conventional perspectives on model complexity in machine learning, particularly the classical U-shaped bias–variance trade-off curve. While this phenomenon has been extensively studied in modern high-capacity models, relatively little attention has been paid to its implications for classical statistical methods such as linear discriminant analysis (LDA), a long-standing classification technique that seeks a linear combination of features that best separates two classes. We leverage random matrix theory to characterize the asymptotic misclassification error of LDA as the dimension $p$ and sample size $n$ grow at the same rate $p/n\to\gamma$. Our main contribution is an asymptotic limit for the LDA risk: for a general population covariance matrix when $\gamma\in(0,1)$, and for the pseudo-inverse classifier with isotropic covariance when $\gamma\in(1,\infty)$. Together, these give an explicit form for the full double-descent risk curve of LDA in both the under- and over-parameterized regimes. We validate our theoretical findings through simulation studies and an empirical analysis of the ARCENE cancer classification dataset.
\end{abstract}

\noindent
{\it Keywords:} Bias-and-variance trade-off; double descent; linear discriminant analysis; model flexibility; random matrix theory

\section{Introduction}\label{section1}

Conventional statistical theory typically considers settings in which the sample size $n$ is large relative to the number of features $p$. However, advances in data collection and computing have produced increasingly high-dimensional data sets, in which $p$ may be comparable to or larger than $n$ \citep{Iain2009}. For example, high dimensionality is prominent in bio-informatics, where microarray technology enables probes of an entire genome to be placed on a chip \citep{xing2001}. In facial recognition, the number of pixels in an image is often much larger than the number of images, which requires modification of classification methods \citep{Yu2001}. These developments have motivated substantial research on the behavior of statistical procedures in high dimensions. In particular, recent studies challenge the classical U-shaped bias--variance trade-off by showing that prediction error may decrease again after the number of predictors exceeds the sample size. This phenomenon is referred to as ``benign overfitting" or ``double descent" \citep[see, e.g.,][]{Belkin2019, Hastie2019}.

Although the double descent phenomenon has been extensively investigated in high-dimensional regression settings, theoretical characterizations of its impact on classification procedures remain comparatively limited. In this paper, we examine double descent behavior in linear discriminant analysis (LDA) by deriving the high-dimensional asymptotic misclassification error under the homoscedastic Gaussian model, in both the under- and over-parameterized regimes.

Our analysis is conducted in a regime in which both the dimension $p$ and the sample size $n$ diverge, with their ratio satisfying
$p/n\rightarrow\gamma$. To obtain explicit deterministic equivalents for the classification error, we leverage tools from random matrix theory (RMT), which enable precise characterization of the spectral functionals of large random covariance matrices in high-dimensional asymptotics. The resulting limits describe how the LDA risk depends on the aspect ratio $\gamma$ and the underlying covariance structure. The theoretical findings are corroborated through a series of Monte Carlo simulations and empirical analyses designed to assess the performance of LDA across a range of covariance models, demonstrating close agreement between the asymptotic predictions and finite-sample behavior.

\subsection{Contributions}\label{sec:1.1}

To make the scope of our contribution precise, we summarize it as follows.
\begin{enumerate}[label=(\arabic*), itemsep=0pt, parsep=0pt]
\item For the under-parameterized regime $\gamma\in(0,1)$, we derive a closed-form deterministic equivalent for the LDA misclassification error under a general deterministic positive-definite covariance matrix $\boldsymbol{\Sigma}$; see Theorem~\ref{Thm1}. The limit depends on the population model through the Mahalanobis distance $\Delta$ and the dimensional ratio $\gamma$, and admits an interpretable decomposition into the effects of mean estimation and covariance estimation.

\item For the over-parameterized regime $\gamma\in(1,\infty)$, we derive a closed-form deterministic equivalent for the misclassification error of the Moore--Penrose pseudoinverse LDA classifier under isotropic covariance $\boldsymbol{\Sigma}=\sigma^2\mathbf I_p$; see Theorem~\ref{Thm2}. Together with Theorem~\ref{Thm1}, this result provides a complete asymptotic characterization of the double-descent risk curve across the under- and over-parameterized regimes in the common isotropic setting. For illustration, Figure~\ref{fig:error_3D} presents the resulting three-dimensional risk surface as a function of the dimensional ratio $\gamma$ and the Mahalanobis distance $\Delta$.

\item Through extensive simulations covering different covariance structures, non-Gaussian feature distributions, label and feature noise, and redundant features, together with an empirical analysis of the ARCENE dataset, we examine when double-descent behavior emerges in LDA. These experiments also assess whether the observed patterns persist beyond the Gaussian and covariance settings covered by the formal theory.
\end{enumerate}

\subsection{Related Literature}\label{sec:1.2}

We organize the relevant literature into two strands: classical and high-dimensional analyses of LDA, and the recent double-descent and benign-overfitting literature.

The error rate of discriminant analysis and its estimators is surveyed by \cite{McLachlan1992}; more recent work has examined LDA and related classifiers in high dimensions, including error-estimator behavior \citep{zollanvari2011}, variants of discriminant rule \citep{Saranadasa1993}, quadratic discriminant analysis \citep{cheng2004}, and other classification criteria \citep{li2016}. Closely related to ours, \citet{raudys1998} analyzed a pseudo-inverse classifier and observed that the expected error first decreases, rises as the dimension approaches the sample size, and decreases again under identity covariance. This anticipates the behavior we study, but \citet{raudys1998} use a fixed-dimension approximation, whereas we obtain closed-form random-matrix limits in the proportional regime that hold on both sides of the peak.

Closest to our proportional-growth setting are \citet{Wang2018} and \citet{cheng2022}, both of which derive asymptotic LDA error formulas under Gaussian sampling using Wishart-based arguments. \citet{Wang2018} focus on the under-parameterized regime $\gamma<1$ and consider settings in which either the covariance matrix or the class means are known, with particular emphasis on bias correction for regularized LDA. \citet{cheng2022} also study the over-parameterized regime, but their analysis focuses on unequal class proportions. Their primary concern is therefore the effect of class imbalance, whereas our analysis isolates the effects of covariance estimation and pseudoinversion under balanced classes.

Our contribution is an explicit and interpretable error formula that separates the mean- and covariance-estimation penalties and describes the risk continuously in both the under- and over-parameterized regimes.

The literature of double-descent and benign-overfitting provides the background for our analysis. It is empirically observed that a model's behavior in the over-parameterized region is at odds with the classical bias--variance trade-off, which predicts the commonly observed $U$-shaped curve when plotting the prediction error against a measure of model complexity. This phenomenon has been widely studied in regression problems \citep[see, e.g.,][]{Belkin2019, Hastie2019, Bartlett2020, nakkiran2020optimal, muthukumar2020harmless}.

In particular, \cite{Belkin2019} first coined the term ``double descent'' to describe the shape of the error curve and showed that the test error exhibits a second descent beyond the interpolation point for certain unregularized models. \cite{Hastie2019} provides a theoretical explanation for the minimum-norm least-squares estimator in generalized linear regression, deriving the prediction risk in the proportional regime $p/n \to \gamma$ and showing that the risk peaks as $\gamma \to 1$ before descending to a global minimum at $\gamma>1$. \cite{Bartlett2020} also studied the minimum-norm interpolator and, through matching risk bounds, identified many low-variance and unimportant directions in parameter space as essential for benign overfitting. 

Theoretical results on double descent for classifiers are scarcer than for regression, owing to the lack of closed-form error expressions. \cite{Montanari2019} analyzed the maximum-margin linear classifier and showed that the test error is monotonically decreasing in the over-parameterized region under conditions on the covariance spectrum. \cite{chatterji2021} proved finite-sample error bounds for the same classifier that approach Bayes risk as the dimension grows, while \cite{deng2021} studied gradient descent on logistic loss and \cite{wang2022binary} emphasized the interplay between performance, signal-to-noise ratio, and covariance structure. More recently, \cite{HVZ26} established the universality of benign overfitting for maximum-margin classifiers, relaxing the strong distributional assumptions of earlier mixture-model analyses. These analyses characterize the error through bounds, convergence directions, or hyperplane geometry. For the LDA plug-in classifier, by contrast, the misclassification error admits an explicit closed form, which allows us to trace the entire double-descent curve analytically and to separate the contributions of mean and covariance estimation.

One of the first demonstrations of double descent for classification uses Fisher linear discriminant analysis with the pseudo-inverse by \cite{Duin2000}, who shows a peak in generalization error when $p \simeq n$ and a reduction in error as the dimension keeps increasing, but only outlines the simulated curve without a theoretical account. Theorems~\ref{Thm1} and~\ref{Thm2} provide this missing account and extend the single curve of \cite{Duin2000} to a range of covariance structures.

\subsection{Organization}\label{sec:1.3}

The remainder of the paper is organized as follows. Section~\ref{sec:setup} introduces the problem together with the closed-form LDA misclassification-error expression. Section~\ref{sec:results} presents the main theoretical results and their interpretation. Section~\ref{sec:empirics} shows simulation results of the LDA error rate against $\gamma$ under different settings. Section~\ref{section6} conducts an empirical data analysis on the ARCENE cancer-classification dataset. Section~\ref{sec:conclusion} concludes with ideas on how the methodology can be extended.

\section{LDA Setup}\label{sec:setup}

We consider binary linear discriminant analysis with observations $(\mathbf{x}_i,y_i)$, $i=1,\dots,n$, where $\mathbf{x}_i\in\mathbb R^p$ and $y_i\in\{1,2\}$. Let $n_k$ denote the number of observations in class $k$, $k=1,2$, so that $n=n_1+n_2$, and let $\mathbf{x}_{ki}$ be the $i$th observation from class $k$. Throughout, we work under the balanced equal-prior setting $n_1=n_2$ and $\pi_1=\pi_2=1/2$. 

Classical LDA is motivated by the homoscedastic Gaussian model
\[
\mathbf{x}\mid(y=k)\sim \mathcal{N}(\boldsymbol{\mu}_k,\boldsymbol{\Sigma}),
\qquad k=1,2,
\]
where the class means differ, but the covariance matrix is common. Here and in the classification rules below, $\mathbf{x}\in\mathbb R^p$ denotes a generic point to be classified. If $\boldsymbol{\mu}_1,\boldsymbol{\mu}_2$ and $\boldsymbol{\Sigma}$ are known, the Bayes rule under the equal-prior assumption assigns $\mathbf{x}$ to class $1$ when
\begin{equation}\label{eq:pop_score}
\xi(\mathbf{x})
:=
\left(\mathbf{x}-\frac{\boldsymbol{\mu}_1+\boldsymbol{\mu}_2}{2}\right)^{\top}
\boldsymbol{\Sigma}^{-1}(\boldsymbol{\mu}_1-\boldsymbol{\mu}_2)
>0,
\end{equation}
and to class $2$ otherwise. The population separation is measured by the Mahalanobis distance
\begin{equation}\label{eq:maha}
\Delta
:=
\left[
(\boldsymbol{\mu}_1-\boldsymbol{\mu}_2)^{\top}
\boldsymbol{\Sigma}^{-1}
(\boldsymbol{\mu}_1-\boldsymbol{\mu}_2)
\right]^{1/2}.
\end{equation}
Under the Gaussian model, $\xi(\mathbf{x})\mid(y=k)$ is normal with mean $(-1)^{k+1}\Delta^2/2$ and variance $\Delta^2$. Hence, the Bayes error is $R_{\mathrm{Bayes}}:=\Phi(-\Delta/2)$, where $\Phi$ denotes the standard normal distribution function. Thus, higher values of $\Delta$ correspond to better class separation and a lower Bayes error. In particular, $R_{\mathrm{Bayes}}\to1/2$ as $\Delta\to0$ and $R_{\mathrm{Bayes}}\to0$ as $\Delta\to\infty$.

Given the training observations, we define the class sample means and the pooled within-class sample covariance matrix as
\begin{equation}\label{eq:estimators}
\widehat{\boldsymbol{\mu}}_k
:=
\frac{1}{n_k}\sum_{i=1}^{n_k}\mathbf{x}_{ki},
\qquad
\widehat{\boldsymbol{\Sigma}}
:=
\frac{1}{n}\sum_{k=1}^{2}\sum_{i=1}^{n_k}
(\mathbf{x}_{ki}-\widehat{\boldsymbol{\mu}}_k)
(\mathbf{x}_{ki}-\widehat{\boldsymbol{\mu}}_k)^{\top}.
\end{equation}
We normalize $\widehat{\boldsymbol{\Sigma}}$ by $n$ rather than by $n-2$; the two choices are asymptotically equivalent in the regime $p/n\to\gamma$ considered below. When $\widehat{\boldsymbol{\Sigma}}$ is non-singular, the usual plug-in LDA rule replaces the population quantities in~\eqref{eq:pop_score} by their sample counterparts. To cover both $p<n$ and $p>n$ without introducing an explicit regularization parameter, we use the Moore--Penrose pseudoinverse version of this plug-in rule.

\begin{definition}[Pseudoinverse LDA classifier]\label{Def:pseudoinv}
The pseudoinverse plug-in LDA classifier assigns $\mathbf{x}$ to class $1$ if
$\widehat{\xi}(\mathbf{x})>0$ and to class $2$ otherwise, where
\[
\widehat{\xi}(\mathbf{x})
:=
\left(\mathbf{x}-\frac{\widehat{\boldsymbol{\mu}}_1+\widehat{\boldsymbol{\mu}}_2}{2}\right)^{\top}
\widehat{\boldsymbol{\Sigma}}^{+}
(\widehat{\boldsymbol{\mu}}_1-\widehat{\boldsymbol{\mu}}_2)
=
\widehat{\beta}_0+\widehat{\boldsymbol{\beta}}^{\top}\mathbf{x},
\]
with $\widehat{\boldsymbol{\beta}}:=\widehat{\boldsymbol{\Sigma}}^{+}
(\widehat{\boldsymbol{\mu}}_1-\widehat{\boldsymbol{\mu}}_2)$ and
$\widehat{\beta}_0:=-\frac12
(\widehat{\boldsymbol{\mu}}_1+\widehat{\boldsymbol{\mu}}_2)^{\top}
\widehat{\boldsymbol{\beta}}$. Here $\widehat{\boldsymbol{\Sigma}}^{+}$ is the Moore--Penrose pseudoinverse: if
$\widehat{\boldsymbol{\Sigma}}=\sum_{j=1}^{p}\widehat\lambda_j
\widehat{\mathbf v}_j\widehat{\mathbf v}_j^{\top}$ with
$\widehat\lambda_j\geq0$, then
$\widehat{\boldsymbol{\Sigma}}^{+}
=\sum_{j:\widehat\lambda_j>0}\widehat\lambda_j^{-1}
\widehat{\mathbf v}_j\widehat{\mathbf v}_j^{\top}$. In particular, when $\widehat{\boldsymbol{\Sigma}}$ is non-singular, $\widehat{\boldsymbol{\Sigma}}^{+}=\widehat{\boldsymbol{\Sigma}}^{-1}$ and the
rule reduces to the usual plug-in LDA classifier.
\end{definition}

Since the test point $\mathbf{x}$ from class $k$ satisfies $\mathbf{x}\sim\mathcal{N}(\boldsymbol{\mu}_k,\boldsymbol{\Sigma})$ and is independent of the training sample, the score $\widehat{\xi}(\mathbf{x})$ is, conditioning on the training sample, the score of a test point from class $k$ has a conditional mean $\widehat{\beta}_0+\widehat{\boldsymbol{\beta}}^{\top}\boldsymbol{\mu}_k$ and a conditional variance $\widehat{\boldsymbol{\beta}}^{\top}\boldsymbol{\Sigma}\widehat{\boldsymbol{\beta}}$. Therefore, the class-conditional out-of-sample error is
\[
e_k := \Phi\left((-1)^k\frac{\widehat{\beta}_0+\widehat{\boldsymbol{\beta}}^{\top}\boldsymbol{\mu}_k}{
\left(
\widehat{\boldsymbol{\beta}}^{\top}
\boldsymbol{\Sigma}
\widehat{\boldsymbol{\beta}}
\right)^{1/2}}\right), \qquad k=1,2,
\]
where the denominator is almost surely positive. Under equal class priors, the corresponding conditional out-of-sample misclassification error is
\begin{equation}\label{eq:lda_error}
R_{\mathrm{LDA}}:=\frac{1}{2}(e_1+e_2)
=
\frac12\sum_{k=1}^{2}
\Phi\left(
\frac{
(-1)^k
\left(\boldsymbol{\mu}_k-\frac12(\widehat{\boldsymbol{\mu}}_1+\widehat{\boldsymbol{\mu}}_2)\right)^{\top}
\widehat{\boldsymbol{\Sigma}}^{+}
(\widehat{\boldsymbol{\mu}}_1-\widehat{\boldsymbol{\mu}}_2)
}{
\left[
(\widehat{\boldsymbol{\mu}}_1-\widehat{\boldsymbol{\mu}}_2)^{\top}
\widehat{\boldsymbol{\Sigma}}^{+}
\boldsymbol{\Sigma}
\widehat{\boldsymbol{\Sigma}}^{+}
(\widehat{\boldsymbol{\mu}}_1-\widehat{\boldsymbol{\mu}}_2)
\right]^{1/2}
}
\right).
\end{equation}
This exact conditional risk expression~\eqref{eq:lda_error} is the starting point for the high-dimensional analysis.

\section{Main Results}\label{sec:results}

Building on the closed-form error expression~\eqref{eq:lda_error}, we study the high-dimensional regime in which $p$ and $n$ diverge proportionally. We work throughout under the following two assumptions.
\begin{assumption}\label{ASSU:ratio}
As $n,p\to\infty$, we consider $\gamma_n:= p/n\to\gamma\in(0,1)\cup(1,\infty)$.
\end{assumption}

\begin{assumption}\label{ASSU:DGP}
For $k=1,2$, the observations $\mathbf{x}_{ki}$, $i=1,\ldots,n_k$, are independent across $(k,i)$ with $\mathbf{x}_{ki}\sim\mathcal{N}(\boldsymbol{\mu}_k,\boldsymbol{\Sigma})$, where $\boldsymbol{\Sigma}$ is deterministic and positive definite.
\end{assumption}

\begin{assumption}\label{ASSU:signal}
The Mahalanobis distance $\Delta$ in~\eqref{eq:maha} is bounded away from $0$ and $\infty$ as $n,p\to\infty$.
\end{assumption}

Assumption~\ref{ASSU:ratio} is the standard proportional-growth regime in random matrix theory, where the estimation error of $\widehat{\boldsymbol{\Sigma}}$ remains non-negligible asymptotically. Assumption~\ref{ASSU:DGP} is the Gaussian data-generating model underlying classical LDA; it makes the discriminant score exactly Gaussian given the training sample. Assumption~\ref{ASSU:signal} keeps the population classification problem non-degenerate: the Bayes error neither tends to $1/2$ nor to $0$. This keeps the population classification problem at a stable level of difficulty. Hence, any limiting loss in performance is attributable to the high-dimensional estimation of the LDA rule, rather than to a vanishing or diverging population separation.

% \begin{assumption}\label{ASSU:DGP}
% For $k=1,2$ and $i=1,\ldots,n_k$, the observations satisfy $\mathbf{x}_{ki}=\boldsymbol{\mu}_k+\boldsymbol{\Sigma}^{1/2}\mathbf{z}_{ki}$, where $\boldsymbol{\Sigma}$ is deterministic and positive definite, and the $\mathbf{z}_{ki}$ are independent across $(k,i)$ with i.i.d.\ standard Gaussian entries, i.e.\ $\mathbf{z}_{ki}\stackrel{\textup{i.i.d.}}{\sim}\mathcal{N}(\mathbf{0},\mathbf{I}_p)$, so that $\mathbf{x}_{ki}\sim\mathcal{N}(\boldsymbol{\mu}_k,\boldsymbol{\Sigma})$.
% \end{assumption}

\subsection{Under-parameterized Regime}\label{sec:under}

When $\gamma<1$, the under-parameterized regime, $\widehat{\boldsymbol{\Sigma}}$ is invertible, and $\widehat{\boldsymbol{\Sigma}}^{+}=\widehat{\boldsymbol{\Sigma}}^{-1}$. Our first result gives the limiting error in this regime for general $\boldsymbol{\Sigma}$. Notably, no structural conditions are imposed on $\boldsymbol{\Sigma}$ beyond positive definiteness: spiked or diverging eigenvalues are allowed.

\begin{theorem}
\label{Thm1}
Under Assumption~\ref{ASSU:ratio}-\ref{ASSU:signal} with $p<n$, the out-of-sample error $R_{\mathrm{LDA}}$ in~\eqref{eq:lda_error} of the classifier of Definition~\ref{Def:pseudoinv} satisfies
\begin{align*}
R_{\mathrm{LDA}} - \Phi \left( - \frac{\sqrt{1-\gamma}\Delta^{2}}{2\sqrt{\Delta^{2}+4\gamma}} \right)\convergeInP0,
\end{align*}
where ``$\text{i.p.}$'' denotes convergence in probability as $n,p\to\infty$.
\end{theorem}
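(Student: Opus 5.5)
The plan is to reduce to the isotropic case by whitening. We use the independence of sample means and pooled covariance under Gaussianity, and then compute the limits of three quadratic forms using Wishart/random-matrix facts.

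\emph{Step 1: whitening.} Write $\mathbf{x}_{ki}=\boldsymbol{\mu}_k+\boldsymbol{\Sigma}^{1/2}\mathbf{z}_{ki}$. Then $\widehat{\boldsymbol{\Sigma}}=\boldsymbol{\Sigma}^{1/2}\mathbf S\boldsymbol{\Sigma}^{1/2}$, where $\mathbf S$ is the pooled sample covariance of the $\mathbf z$'s. Set $\boldsymbol\delta:=\boldsymbol{\Sigma}^{-1/2}(\boldsymbol{\mu}_1-\boldsymbol{\mu}_2)$, so $\|\boldsymbol\delta\|=\Delta$. Set $\mathbf d:=\boldsymbol{\Sigma}^{-1/2}(\widehat{\boldsymbol{\mu}}_1-\widehat{\boldsymbol{\mu}}_2)=\boldsymbol\delta+\mathbf w$ with $\mathbf w\sim\mathcal N(\mathbf 0,(4/n)\mathbf I_p)$, using $n_1=n_2=n/2$. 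Set $\mathbf m:=\boldsymbol{\Sigma}^{-1/2}\bigl(\boldsymbol{\mu}_k-\tfrac12(\widehat{\boldsymbol{\mu}}_1+\widehat{\boldsymbol{\mu}}_2)\bigr)=(-1)^{k+1}\boldsymbol\delta/2-\mathbf u$ with $\mathbf u\sim\mathcal N(\mathbf 0,(1/n)\mathbf I_p)$. In the balanced case $\mathbf u$ and $\mathbf w$ are independent. The numerator in~\eqref{eq:lda_error} becomes $(-1)^k\mathbf m^\top\mathbf S^{-1}\mathbf d$ and the squared denominator becomes $\mathbf d^\top\mathbf S^{-2}\mathbf d$. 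So $\boldsymbol\Sigma$ disappears entirely, which explains why no spectral condition is needed. By Gaussianity, $\mathbf S$ is independent of $(\mathbf u,\mathbf w)$, and $n\mathbf S\sim W_p(n-2,\mathbf I)$. By rotation invariance of $\mathbf S$, the joint law of the relevant forms depends on $\boldsymbol\delta$ only through $\Delta$.

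\emph{Step 2: limits of the quadratic forms.} Condition on $\mathbf S$ and use concentration of Gaussian quadratic forms, i.e.\ Hanson--Wright with variance of order $\|\mathbf S^{-1}\|_F^2/n^2=O(1/n)$. The extreme eigenvalues of $\mathbf S$ stay in $[(1-\sqrt\gamma)^2-\epsilon,(1+\sqrt\gamma)^2+\epsilon]$ a.s.\ since $\gamma<1$. This gives, with $\tau_j:=p^{-1}\operatorname{tr}\mathbf S^{-j}$:
\[
\mathbf w^\top\mathbf S^{-j}\mathbf w\approx 4\gamma\tau_j,\qquad \mathbf u^\top\mathbf S^{-1}\mathbf w\approx0,\qquad \boldsymbol\delta^\top\mathbf S^{-j}\mathbf w\approx0,\qquad \boldsymbol\delta^\top\mathbf S^{-1}\mathbf u\approx 0,\qquad \mathbf u^\top \mathbf S^{-1}\boldsymbol\delta\approx0.
\]
For the deterministic-direction forms we use Haar invariance of the eigenvectors of $\mathbf S$: $\boldsymbol\delta^\top\mathbf S^{-j}\boldsymbol\delta\approx\Delta^2\tau_j$. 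The Marchenko--Pastur moments give $\tau_1\to1/(1-\gamma)$ and $\tau_2\to1/(1-\gamma)^3$. Hence the numerator $(-1)^k\mathbf m^\top\mathbf S^{-1}\mathbf d\to-\tfrac12\Delta^2/(1-\gamma)$ for both $k$. The cross terms cancel because, e.g., $\mathbf u^\top\mathbf S^{-1}\mathbf w\to0$. The squared denominator tends to $(\Delta^2+4\gamma)/(1-\gamma)^3$.

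\emph{Step 3: conclusion.} The ratio converges in probability to $-\tfrac12\Delta^2(1-\gamma)^{1/2}/\sqrt{\Delta^2+4\gamma}$, uniformly in the sense needed since $\Delta$ is bounded by Assumption~\ref{ASSU:signal}. Continuity and Lipschitzness of $\Phi$ then give the claim for each $e_k$, and hence for $R_{\mathrm{LDA}}$. Since $\Delta=\Delta_n$ may not converge, we argue along subsequences, or equivalently compare the difference directly using that $\Phi$ is $1$-Lipschitz.

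The main obstacle is Step 2 for the deterministic direction. Isotropy of $\boldsymbol\delta^\top\mathbf S^{-2}\boldsymbol\delta/\Delta^2$ around $\tau_2$ requires a concentration statement for the Haar-distributed eigenvector projection. I would handle it by writing $\boldsymbol\delta=\Delta\,\mathbf O\mathbf e_1$ with $\mathbf O$ Haar, independent of the spectrum. Then $\mathbf e_1^\top\mathbf O^\top\mathbf S^{-j}\mathbf O\mathbf e_1=\sum_i\lambda_i^{-j}v_i^2$, where $\mathbf v$ is uniform on the sphere, and its variance is $O(\max_i\lambda_i^{-2j}/p)$. Alternatively, one can use the exact Wishart identities $E[\mathbf S^{-1}]$ and $E[\mathbf S^{-2}]$ together with a variance bound.
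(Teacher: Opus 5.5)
Your proposal is correct and follows essentially the same route as the paper: whitening to $\mathbf S=n^{-1}\mathbf Z^{\top}\mathbf H\mathbf Z$ with $n\mathbf S\sim W_p(n-2,\mathbf I_p)$, independence of the mean errors (your $\mathbf w=\mathbf a_-$ and $\mathbf u=\mathbf a_+/2$) from $\mathbf S$, Haar concentration for the $\boldsymbol\delta$-forms, vanishing cross terms and the $4\gamma\tau_2$ noise term, and the Mar\v{c}enko--Pastur limits $\tau_1\to(1-\gamma)^{-1}$, $\tau_2\to(1-\gamma)^{-3}$. The only differences are cosmetic: you invoke Hanson--Wright where the paper uses conditional means and variances with Chebyshev, and you make explicit the Lipschitz/subsequence handling of a non-convergent $\Delta_n$, which the paper leaves implicit.
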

Theorem~\ref{Thm1} shows that the limiting error depends on the data model only through the Mahalanobis distance $\Delta$ and the dimensional ratio $\gamma$. The error decreases in $\Delta$: as $\Delta$ increases, the argument of $\Phi$ becomes more negative, so the error falls from random guessing toward~$0$.

The limit can be written as
\[
\Phi\left(-\frac{\Delta}{2}\cdot \frac{1}{\sqrt{T_{\mu}T_{\Sigma}}}\right),
\qquad
T_{\mu}:=1+\frac{4\gamma}{\Delta^{2}},
\qquad
T_{\Sigma}:=\frac{1}{1-\gamma},
\]
where $T_{\mu}$ captures the effect of mean estimation and $T_{\Sigma}$ captures the effect of covariance estimation. In particular, if the population means are known, then $T_{\mu}\equiv 1$ and the limit reduces to $\Phi(-\Delta\sqrt{1-\gamma}/2)$.

To interpret the limiting expression, first maintain the population separation $\Delta$ fixed at some value bounded away from $0$ and $\infty$ and vary the dimensional ratio $\gamma$. As $\gamma\to0$, both $T_{\mu}\to1$ and $T_{\Sigma}\to1$, so the limiting error recovers the Bayes error $\Phi(-\Delta/2)$, as expected in the classical low-dimensional regime. As $\gamma\uparrow1$, the covariance-estimation penalty $T_{\Sigma}$ diverges and the limiting error approaches $\Phi(0)=1/2$, reflecting the ill-conditioning of the sample covariance matrix near $\gamma=1$. This boundary behavior is consistent with Theorem~2.2 of \citet{Wang2018} for the case of known class means. For a given $\Delta$, the limiting error increases monotonically with $\gamma$ in the under-parameterized regime, as both penalties $T_{\mu}$ and $T_{\Sigma}$ grow. This describes the effect of dimensionality with the signal held fixed; Assumption~\ref{ASSU:signal} only bounds $\Delta$ and does not fix it across dimensions, so it remains compatible with the growing-signal simulations of Section~\ref{sec:empirics}.

The U-shaped curves observed in the simulations arise because the population separation may change as features are added. To make the dimension dependence of the mean difference explicit, define
\begin{equation}\label{eq:mean_difference}
\boldsymbol{\delta}_p:=\boldsymbol{\mu}_1-\boldsymbol{\mu}_2.
\end{equation}
In the simulations, we set $\boldsymbol{\delta}_p=\mathbf 1_p$. Consequently, the Mahalanobis distance $\Delta$ defined in~\eqref{eq:maha} may vary with $p$, with
$\Delta^2=\boldsymbol{\delta}_p^{\top}\boldsymbol{\Sigma}^{-1}\boldsymbol{\delta}_p$.
Its behavior depends on the covariance structure. Under identity and AR(1) covariance with fixed parameters, $\Delta^2$ grows with $p$ and eventually falls outside the bounded-signal regime of Assumption~\ref{ASSU:signal}. Under compound-symmetry covariance with fixed $\rho>0$, by contrast, $\Delta^2=p/\{\sigma^2(1-\rho+\rho p)\}$ remains bounded as $p\to\infty$.

Increasing the dimension can therefore have two competing effects. Additional coordinates may strengthen population separation, while the larger dimensional ratio increases covariance-estimation error. At relatively small values of $\gamma$, the improvement in separation can dominate and reduce error. As $\gamma$ approaches one, the divergence of $T_{\Sigma}$ causes the estimation effect to dominate and the error to rise toward its peak. This competition explains the U-shaped error curves shown in Figures~\ref{fig:4.1} and~\ref{fig:2}, including the local minimum at an intermediate value of $\gamma$ also observed by \citet{raudys1998}.
\begin{figure}[!htb]
\centering
\includegraphics[scale = 0.3]{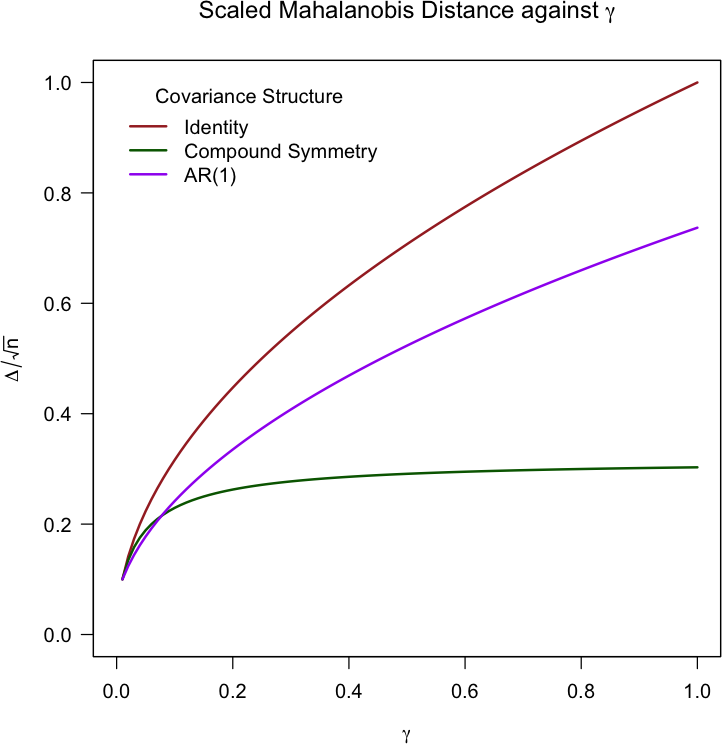}
\caption{Scaled Mahalanobis distance $\Delta/\sqrt{n}$ as a function of $\gamma$ under identity, AR(1), and compound-symmetry covariance structures. We set $\boldsymbol{\delta}_p=\boldsymbol{\mu}_1-\boldsymbol{\mu}_2=\mathbf{1}_p$ and $n=100$, with $\rho=0.3$ for the AR(1) covariance and $\rho=0.1$ for the compound-symmetry covariance.}\label{fig:4.1}
\end{figure}

\begin{figure}[!htb]
\centering
\begin{subfigure}{0.45\textwidth}
\includegraphics[width=\textwidth]{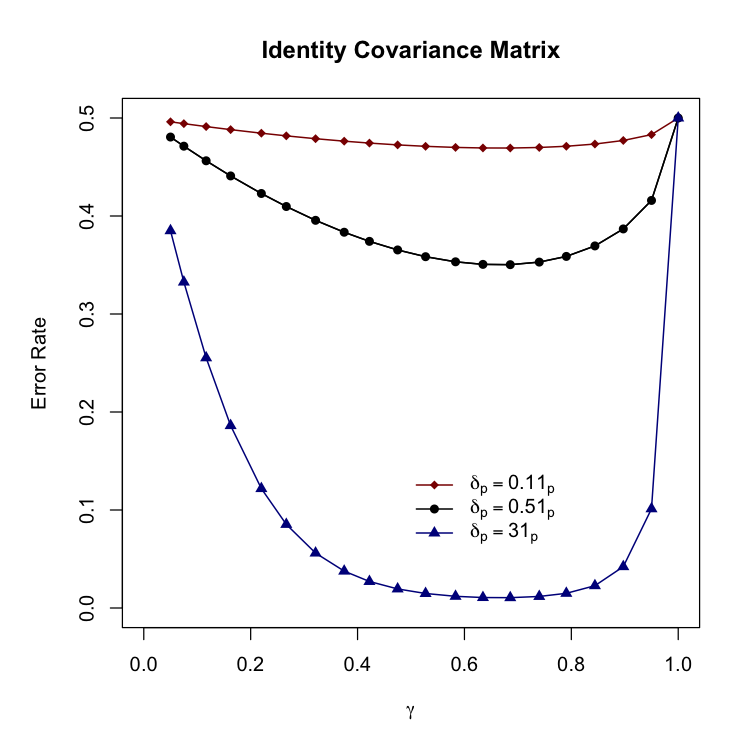}
\caption{Identity covariance with
$\boldsymbol{\delta}_p=c\mathbf 1_p$ for
$c\in\{0.1,0.5,3\}$.}
\label{fig:identity_error}
\end{subfigure}
\hfill
\begin{subfigure}{0.45\textwidth}
\includegraphics[width=\textwidth]{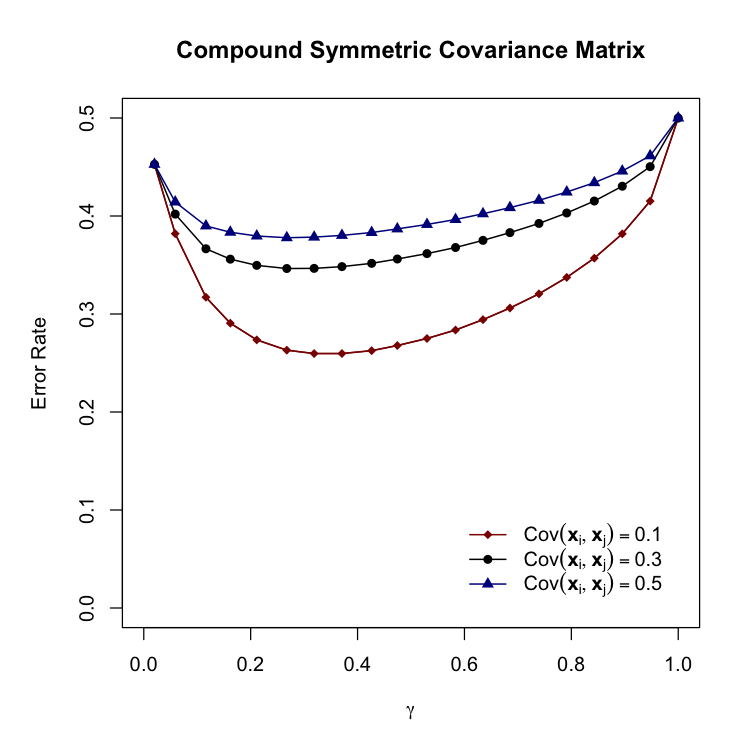}
\caption{Compound-symmetry covariance with
$\boldsymbol{\delta}_p=3\mathbf 1_p$ and
$\rho\in\{0.1,0.3,0.5\}$.}
\label{fig:compound_error}
\end{subfigure}
\caption{Under-parameterized LDA error curves illustrating the effects of population mean separation and feature dependence. The sample size varies over $n\in[20,400]$, with $p/n\in(0,1)$.}
\label{fig:2}
\end{figure}

\subsection{Over-parameterized Regime}\label{sec:over}

When $\gamma>1$, the pooled sample covariance matrix $\widehat{\boldsymbol{\Sigma}}$ is singular and the classifier of Definition~\ref{Def:pseudoinv} uses the Moore--Penrose pseudoinverse $\widehat{\boldsymbol{\Sigma}}^{+}$. We focus on the isotropic covariance setting, which yields an explicit characterization of the limiting error; the general covariance case is discussed at the end of this subsection.

\begin{theorem}\label{Thm2}
Under Assumption~\ref{ASSU:ratio}-\ref{ASSU:signal} with $p>n$, suppose that $\boldsymbol{\Sigma}=\sigma^2\mathbf I_p$ for some fixed $\sigma^2>0$. Then the out-of-sample error $R_{\mathrm{LDA}}$ in~\eqref{eq:lda_error} of the classifier of Definition~\ref{Def:pseudoinv} satisfies
\[
R_{\mathrm{LDA}}-\Phi\!\left(-\frac{\sqrt{\gamma-1}\,\Delta^{2}}{2\gamma\,\sqrt{\Delta^{2}+4\gamma}}\right)\convergeInP0.
\]
\end{theorem}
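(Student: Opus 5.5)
The plan is to reduce \eqref{eq:lda_error} to a few quadratic forms in $\widehat{\boldsymbol{\Sigma}}^{+}$ and evaluate each one using rotational invariance together with the Marchenko--Pastur law for the companion $n\times n$ Gram matrix. Under Assumption~\ref{ASSU:DGP} with $\boldsymbol{\Sigma}=\sigma^2\mathbf I_p$ and $n_1=n_2=n/2$, Gaussianity gives three mutually independent pieces:
\begin{itemize}[itemsep=0pt, parsep=0pt]
\item[] $\widehat{\boldsymbol{\delta}}:=\widehat{\boldsymbol{\mu}}_1-\widehat{\boldsymbol{\mu}}_2=\boldsymbol{\delta}_p+\mathbf g$, with $\mathbf g\sim\mathcal N(\mathbf 0,4\sigma^2 n^{-1}\mathbf I_p)$;
\item[] $\boldsymbol{\varepsilon}:=\frac12(\widehat{\boldsymbol{\mu}}_1+\widehat{\boldsymbol{\mu}}_2)-\frac12(\boldsymbol{\mu}_1+\boldsymbol{\mu}_2)\sim\mathcal N(\mathbf 0,\sigma^2n^{-1}\mathbf I_p)$;
\item[] $\widehat{\boldsymbol{\Sigma}}=\sigma^2n^{-1}\mathbf Z\mathbf Z^{\top}$, where $\mathbf Z$ is $p\times(n-2)$ with i.i.d.\ $\mathcal N(0,1)$ entries.
\end{itemize}
Since $\mathbf Z\stackrel{d}{=}\mathbf O\mathbf Z$ for any orthogonal $\mathbf O$, we may write $\widehat{\boldsymbol{\Sigma}}^{+}=\mathbf V\mathbf D^{-1}\mathbf V^{\top}$, where $\mathbf V$ is Haar distributed on the Stiefel manifold and is independent of $\mathbf D$ and of $(\widehat{\boldsymbol{\delta}},\boldsymbol{\varepsilon})$.

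The numerator for class $k=1$ is $\frac12\boldsymbol{\delta}_p^{\top}\widehat{\boldsymbol{\Sigma}}^{+}\widehat{\boldsymbol{\delta}}-\boldsymbol{\varepsilon}^{\top}\widehat{\boldsymbol{\Sigma}}^{+}\widehat{\boldsymbol{\delta}}$, and the class-$2$ numerator is symmetric. The denominator is $\sigma\,(\widehat{\boldsymbol{\delta}}^{\top}(\widehat{\boldsymbol{\Sigma}}^{+})^2\widehat{\boldsymbol{\delta}})^{1/2}$. The key lemma, applied conditionally on $(\widehat{\boldsymbol{\delta}},\boldsymbol{\varepsilon},\mathbf D)$, is that for deterministic $\mathbf a,\mathbf b$ and $\mathbf A=\mathbf V\mathbf D^{-m}\mathbf V^{\top}$ with $m\in\{1,2\}$,
\[
\mathbf a^{\top}\mathbf A\mathbf b=\frac{\operatorname{tr}\mathbf A}{p}\,\mathbf a^{\top}\mathbf b+O_P\!\left(\frac{\|\mathbf a\|\,\|\mathbf b\|\,\|\mathbf A\|}{\sqrt p}\right).
\]
This follows from Haar moment identities, or equivalently by writing $\mathbf V=\mathbf G(\mathbf G^{\top}\mathbf G)^{-1/2}$ and using Gaussian quadratic-form concentration. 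To control $\|\mathbf A\|$ I will invoke the Bai--Yin bound: the smallest nonzero eigenvalue of $\widehat{\boldsymbol{\Sigma}}$ is, with probability tending to one, at least $\sigma^2\gamma(1-\gamma^{-1/2})^2/2>0$. This holds because $\gamma>1$ is bounded away from $1$.

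Next come the traces. The nonzero eigenvalues of $\widehat{\boldsymbol{\Sigma}}$ coincide with those of $\sigma^2n^{-1}\mathbf Z^{\top}\mathbf Z$, and $\mathbf Z^{\top}\mathbf Z/p$ has Marchenko--Pastur limit with ratio $1/\gamma$. Using $\int x^{-1}\,dF_c=1/(1-c)$ and $\int x^{-2}\,dF_c=1/(1-c)^3$ with $c=1/\gamma$, I expect
\[
\frac{\operatorname{tr}\widehat{\boldsymbol{\Sigma}}^{+}}{p}\to\frac{1}{\sigma^2\gamma(\gamma-1)},\qquad
\frac{\operatorname{tr}(\widehat{\boldsymbol{\Sigma}}^{+})^2}{p}\to\frac{1}{\sigma^4(\gamma-1)^3}.
\]
Replacing $n$ by $n-2$ does not affect these limits. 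Feeding them into the lemma, with $\|\boldsymbol{\delta}_p\|^2=\sigma^2\Delta^2$, $\|\widehat{\boldsymbol{\delta}}\|^2=\sigma^2(\Delta^2+4\gamma)+o_P(1)$ and $\boldsymbol{\delta}_p^{\top}\widehat{\boldsymbol{\delta}}=\sigma^2\Delta^2+o_P(1)$, the dominant terms are
\[
\frac12\boldsymbol{\delta}_p^{\top}\widehat{\boldsymbol{\Sigma}}^{+}\widehat{\boldsymbol{\delta}}\to\frac{\Delta^2}{2\gamma(\gamma-1)},\qquad
\sigma^2\,\widehat{\boldsymbol{\delta}}^{\top}(\widehat{\boldsymbol{\Sigma}}^{+})^2\widehat{\boldsymbol{\delta}}\to\frac{\Delta^2+4\gamma}{(\gamma-1)^3}.
\]
The cross term $\boldsymbol{\varepsilon}^{\top}\widehat{\boldsymbol{\Sigma}}^{+}\widehat{\boldsymbol{\delta}}$ is $o_P(1)$, because $\boldsymbol{\varepsilon}^{\top}\widehat{\boldsymbol{\delta}}=O_P(n^{-1/2})$ and the fluctuation term in the lemma is $O_P(p^{-1/2})$. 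Taking the ratio gives the argument $-\sqrt{\gamma-1}\,\Delta^2/(2\gamma\sqrt{\Delta^2+4\gamma})$ for both classes. The statement then follows from continuity of $\Phi$, since Assumption~\ref{ASSU:signal} keeps the denominator limit bounded away from $0$. Along subsequences where $\Delta$ converges one gets the limit directly, and a standard subsequence argument handles non-convergent $\Delta$.

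The main obstacle is making the quadratic-form lemma rigorous for the Haar-projected $\widehat{\boldsymbol{\Sigma}}^{+}$ with $\mathbf a,\mathbf b$ that are not isotropic. Here $\widehat{\boldsymbol{\delta}}$ has the deterministic mean $\boldsymbol{\delta}_p$, so one cannot simply average over a Gaussian vector. I would handle this by conditioning on $\mathbf D$ and computing the first two moments of $\mathbf a^{\top}\mathbf V\mathbf D^{-m}\mathbf V^{\top}\mathbf b$ over the Haar measure, which yields variance $O(\|\mathbf a\|^2\|\mathbf b\|^2\|\mathbf D^{-m}\|^2/p)$. This is combined with the Bai--Yin lower bound on the smallest nonzero eigenvalue, used to bound $\|\mathbf D^{-1}\|$ on an event of probability tending to one. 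The second thing to check is the inverse-moment formula $\int x^{-2}\,dF_c=(1-c)^{-3}$, which can be read off from the derivative of the MP Stieltjes transform at $0$.
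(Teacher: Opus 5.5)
Your proposal is correct and follows essentially the same route as the paper. Both arguments rest on the same ingredients: independence of the mean estimates from $\widehat{\boldsymbol{\Sigma}}$, Haar-distributed eigenvectors with quadratic-form concentration, the Mar\v{c}enko--Pastur inverse moments $1/(\gamma(\gamma-1))$ and $1/(\gamma-1)^3$, and the Bai--Yin lower edge $(\sqrt\gamma-1)^2$ to keep $\|\widehat{\boldsymbol{\Sigma}}^{+}\|$ bounded. The differences are minor. You apply the Haar lemma conditionally on the random vectors $\widehat{\boldsymbol{\delta}}$ and $\boldsymbol{\varepsilon}$, which needs only their norms, whereas the paper splits into the terms $D_1^{+},\dots,D_7^{+}$ and uses Gaussian moments given $\mathbf S$. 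You also obtain the traces from the companion $n\times n$ Gram matrix with ratio $1/\gamma$, rather than from the Stieltjes transform at the origin.
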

Paralleling Theorem~\ref{Thm1}, the limit factorizes as
\[
\Phi\!\left(
-\frac{\Delta}{2\sqrt{T_{\mu}^{+}T_{\Sigma}^{+}}}
\right),
\qquad
T_{\mu}^{+}:=1+\frac{4\gamma}{\Delta^2},
\qquad
T_{\Sigma}^{+}:=\frac{\gamma^2}{\gamma-1}.
\]
Since Theorem~\ref{Thm1} includes the isotropic case, the two regimes can be compared under $\boldsymbol{\Sigma}=\sigma^2\mathbf I_p$. The mean-estimation term has the same functional form, $1+4\gamma/\Delta^2$, on both sides of $\gamma=1$, whereas the covariance-estimation term is $(1-\gamma)^{-1}$ for $\gamma<1$ and $\gamma^2/(\gamma-1)$ for $\gamma>1$.

As $\gamma$ approaches one from either side, the covariance-estimation term diverges and the limiting error approaches $\Phi(0)=1/2$. The two expressions therefore are continuously joined at $\gamma=1$. Spectrally, the lower edge of the relevant sample-covariance spectrum approaches zero from both sides, making either the inverse or the pseudoinverse unstable near the threshold.

Immediately after $\gamma=1$, the limiting error decreases from $1/2$. For a given $\Delta$, this behavior is determined by ${\sqrt{\gamma-1}}/(
{\gamma\sqrt{\Delta^2+4\gamma}})$, which initially increases as $\gamma$ moves into the over-parameterized regime. The corresponding argument of $\Phi$ becomes more negative, producing the second descent. An interpretation is that the nonzero eigenvalues of the sample covariance matrix move away from zero as $\gamma$ increases, reducing the instability of the pseudoinverse.

The second descent does not continue indefinitely when $\Delta$ remains bounded. The same factor eventually decreases to zero as in $\gamma\to\infty$. Hence, the limiting error reaches a minimum at an intermediate value of $\gamma>1$ and then gradually returns to $1/2$. Under extreme over-parameterization, the pseudoinverse operates only on the sample span, whose dimension is at most $n$, while the ambient dimension continues to grow. With bounded population separation, the effective discriminatory information retained by the classifier eventually becomes insufficient to maintain the improvement obtained during the second descent.

Figure~\ref{fig:error_3D} combines the limits in Theorems~\ref{Thm1} and~\ref{Thm2}. For each given value of $\Delta$, the error increases toward a ridge of height $1/2$ as $\gamma\uparrow1$, then decreases on the over-parameterized side, forming the second-descent region. The surface subsequently turns upward as $\gamma$ becomes large, reflecting the return of the limiting error toward $1/2$. Increasing $\Delta$ lowers the surface for every $\gamma\neq1$ and deepens the over-parameterized valley, since stronger population separation improves classification in both regimes. However, in $\gamma=1$, the limiting error remains $1/2$ for all $\Delta$, because the covariance-estimation instability dominates the population signal.

Thus, for a given bounded $\Delta$, the theoretical surface shows an ascent toward random guessing as $\gamma$ approaches one, a second descent after $\gamma$ exceeds one, and a later deterioration under extreme over-parameterization. A sustained decline toward zero requires the population separation itself to increase with dimension, as examined separately in Section~\ref{sec:empirics}.

\begin{figure}[!htb]
\centering
\includegraphics[width=0.48\textwidth]{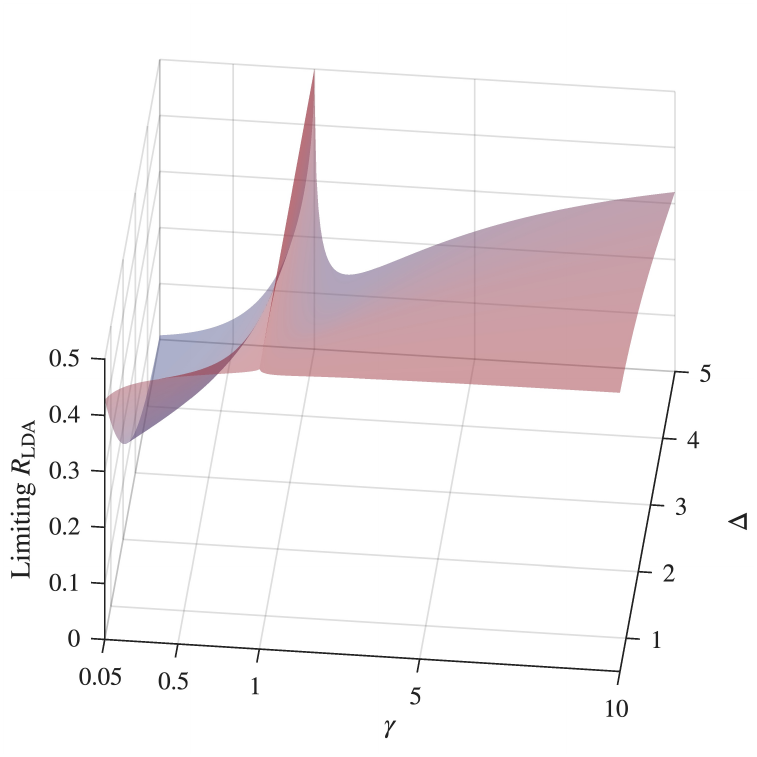}
\caption{Limiting out-of-sample misclassification error of the pseudoinverse LDA classifier as a function of the dimensional ratio $\gamma$ and the Mahalanobis distance $\Delta$. The surface for $\gamma<1$ is given by Theorem~\ref{Thm1}, while the surface for $\gamma>1$ is given by the isotropic result in Theorem~\ref{Thm2}. The two expressions have the common one-sided limit $1/2$ at $\gamma=1$ and therefore join continuously at the threshold.}
\label{fig:error_3D}
\end{figure}

% \subsection{Connection to Benign Overfitting}

% We recall that benign overfitting refers to the phenomenon in which increasing features to close to or more than the sample size leads to an improvement in the error rate \citep{Bartlett2020}. \rev{Theorems~1 and~2 together provide a theoretical account of this phenomenon for LDA in the isotropic model: the risk curve is continuous across the interpolation threshold $\gamma=1$, where it peaks at random guessing, and---when the added features are informative so that $\Delta^2$ grows with the dimension---descends a second time in the over-parameterized regime $\gamma>1$.} We also notice that in the double asymptotic setting, the relationship between LDA error and model flexibility is largely influenced by the dimension effect of estimating the population parameters. \rev{For a general covariance structure, where only the under-parameterized result is proven, we} further investigate this phenomenon using simulated data\rev{, with a focus on the $\gamma>1$ regime, in the next section.}

Theorems~\ref{Thm1} and~\ref{Thm2} characterize the limiting error under the covariance settings covered by the theory. In the next section, we use simulations to examine whether the same double-descent pattern persists under more general covariance structures and model settings.

\section{Simulation Analysis}\label{sec:empirics}

This section serves two purposes. First, we examine the finite-sample accuracy of the asymptotic results in Theorems~\ref{Thm1} and~\ref{Thm2} under the isotropic Gaussian setting covered by the theory. Second, we investigate whether the same double-descent patterns persist under more general covariance structures, non-Gaussian feature distributions, and different forms of label and feature noise.

\subsection{Simulation Design}\label{sec:sim-design}

We consider $20$ simulation designs indexed by $j=1,2,\ldots,20$. In design $j$, the two classes are balanced, with $n_{1j}=n_{2j}$ observations each, and the dimension is $p_j$. The sample size per-class ranges from $20$ to $50$ and $p_j$ ranges from $5$ to $400$. Within each class, the observations are randomly split into $70\%$ training and $30\%$ tests; writing $n_j$ for the total training size, we set $\gamma_j:=p_j/n_j$ and retain the designs with $\gamma_j\in[0.5,4]$.

For each design, the population parameters $\boldsymbol{\mu}_1$, $\boldsymbol{\mu}_2$, and $\boldsymbol{\Sigma}$ are chosen according to the covariance and signal structure under consideration. The raw mean difference $\boldsymbol{\delta}_p=\boldsymbol{\mu}_1-\boldsymbol{\mu}_2$ specifies the design, while the resulting separation is measured by the Mahalanobis distance $\Delta$ of~\eqref{eq:maha}, with $\Delta^2=\boldsymbol{\delta}_p^{\top}\boldsymbol{\Sigma}^{-1}\boldsymbol{\delta}_p$; the same raw mean difference may correspond to different classification difficulties under different covariance structures.

Under the baseline (Gaussian) design, observations are drawn independently as $\mathbf{x}_{ki}\sim\mathcal{N}(\boldsymbol{\mu}_k,\boldsymbol{\Sigma})$ for $k\in\{1,2\}$. A pseudo-inverse LDA classifier (Definition~\ref{Def:pseudoinv}) is fitted to the training data and evaluated on the test set, giving the empirical misclassification error
\[
\widehat R
=
\frac{1}{n_j^{\mathrm{te}}}
\sum_{i=1}^{n_j^{\mathrm{te}}}
\mathbf 1\bigl(y_i\neq\widehat y_i\bigr),
\]
where $n_j^{\mathrm{te}}$ is the number of test observations, $y_i$ the true label, and $\widehat y_i$ the predicted label. Each design is repeated $100$ times and the reported error is averaged over replications.

For Gaussian designs, we also evaluate the exact conditional error~\eqref{eq:lda_error} from the training-sample estimates and the population parameters. Because this expression integrates over an independent Gaussian test point, it avoids the additional Monte Carlo variation of a finite test set. We compare the average conditional error with the empirical test error and, where applicable, with the asymptotic limits of Theorems~\ref{Thm1} and~\ref{Thm2}. For designs outside the Gaussian model, performance is assessed by empirical test error alone.

The remaining subsections report results by experimental factor: sample size (Section~\ref{sec:sim-sample}), covariance structure (Section~\ref{sec:sim-cov}), non-Gaussian features (Section~\ref{sec:non-normal}), label and feature noise (Section~\ref{sec:noise}) and redundant features (Section~\ref{sec:redundant}).

\subsection{Effect of Sample Size}\label{sec:sim-sample}

In \figref{fig:sample_size}, we select three sample sizes to visualize the error curve with varying numbers of features. The error curves exhibit double descent, showing a U-shaped in the under-parameterized regime, peaks at the interpolation threshold $p \simeq n$, and decreases afterward. We notice an interesting phenomenon that at different values of $p$, more training samples do not lead to better performance because increasing the sample size shifts the interpolation threshold to the right, leading to a higher error rate at a fixed $p$. This observation is consistent with the behavior observed by \cite{deng2021} for logistic regression with polynomial features and \citep{Nakkiran2021(biggermodels)} in their analysis of the neural network model performance. 
\begin{figure}[!htb]
\centering
\includegraphics[scale = 0.45]{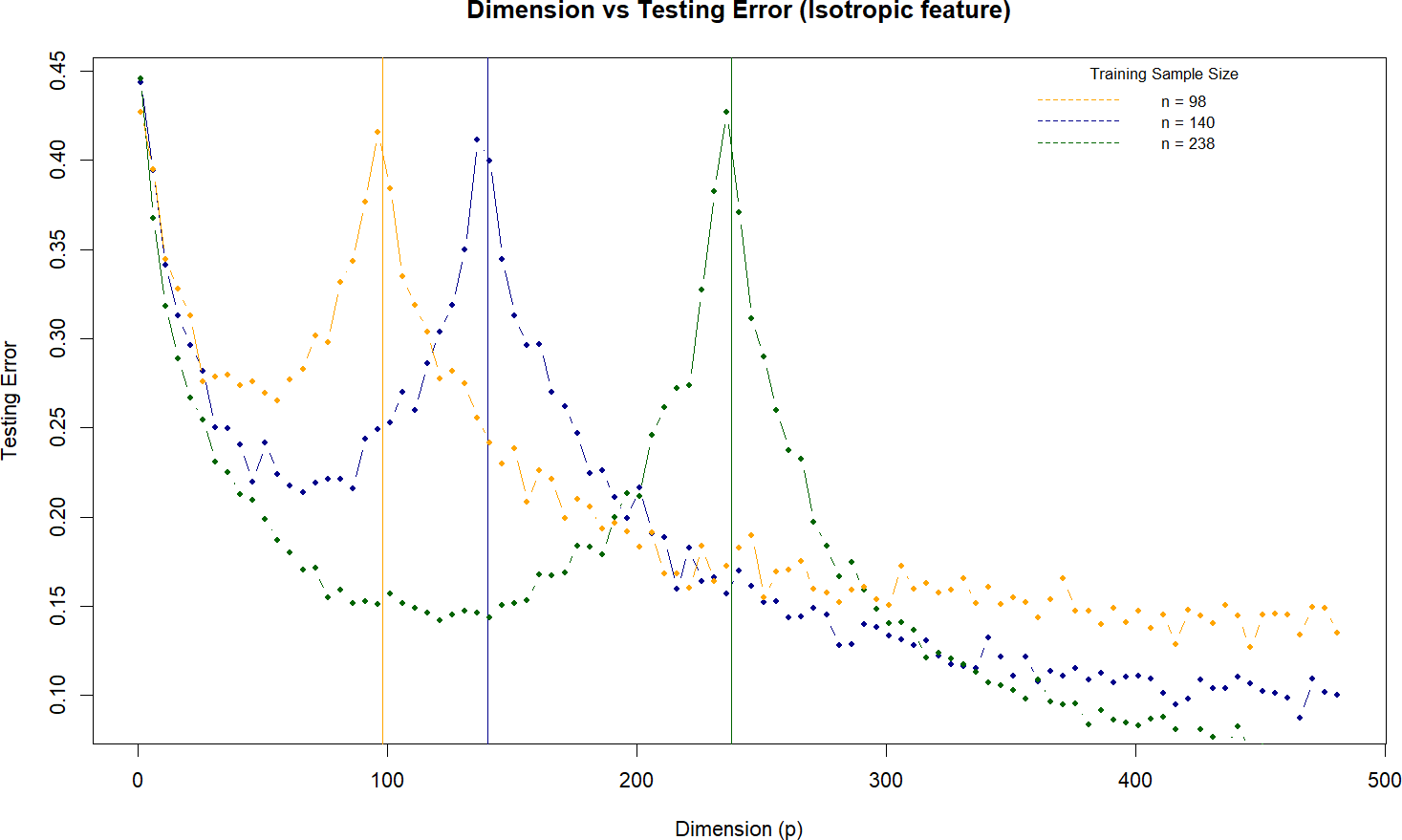}
\caption{Simulated error rate with isotropic features for fixed $n$ and increasing $p$ ($\boldsymbol{\delta}_p = 0.1\mathbf 1_p$, $\boldsymbol{\Sigma} = \mathbf{I}_p$).}
\label{fig:sample_size}
\end{figure}

\subsection{Effect of Covariance Structure}\label{sec:sim-cov}

\noindent \textbf{1. Isotropic Covariance Structure $\boldsymbol{\Sigma}=\mathbf{I}_p$.} In~\figref{fig:iso_cov}, we increase both $n$ and $p$. The test error becomes larger as the magnitude of the population mean difference decreases, reflecting the increasing difficulty of separating the two classes. Specifically, we set $\boldsymbol{\delta}_p=c\mathbf{1}_p$ for $c\in\{0.5,1,1.5\}$. For each signal level, the error increases as $\gamma$ approaches one and then decreases sharply in the over-parameterized regime. Since the population covariance is isotropic, there is no heterogeneity across the covariance directions. As $\gamma$ moves away from the one in the over-parameterized regime, the pseudoinverse becomes better conditioned, contributing to the rapid decline in error.

Theorem~\ref{Thm2} explains the initial second descent immediately beyond $\gamma=1$. However, in the present simulation, $\Delta^2=c^2p$ also increases with dimension. This growing-signal design eventually lies outside the bounded-signal condition in Assumption~\ref{ASSU:signal} and further reinforces the decline. This explains why the error continues toward zero over the range of $\gamma$ considered rather than eventually returning to $1/2$, as in the bounded-signal limit.
\begin{figure}[!htb]
\centering
\includegraphics[scale=0.45]{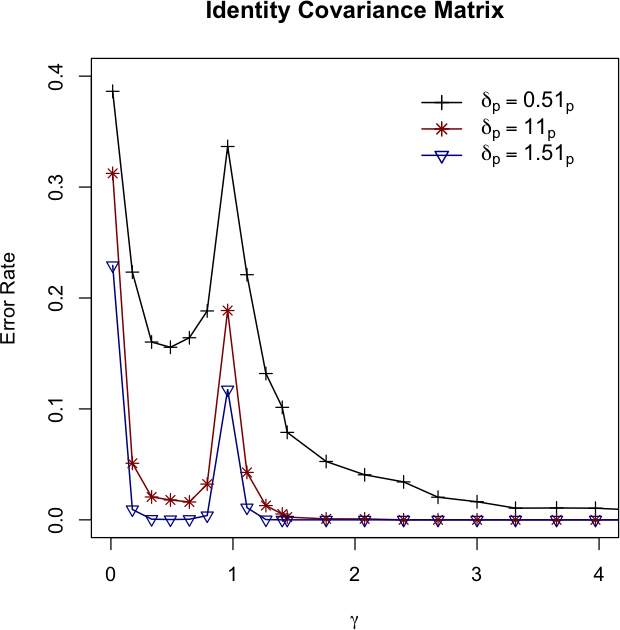}
\caption{Average test error under identity covariance with increasing $n$, $p$, and $\gamma$, where $\boldsymbol{\delta}_p=c\mathbf{1}_p$ for $c\in\{0.5,1,1.5\}$.}
\label{fig:iso_cov}
\end{figure}

\noindent \textbf{2. Compound Symmetry Structure.}
We introduce a common dependence among the features by setting the off-diagonal entries of $\boldsymbol{\Sigma}$ equal to $\rho$:
\[
\boldsymbol{\Sigma} = \sigma^2
\begin{bmatrix}
1 & \rho & \rho & \rho \\
\rho & 1 & \rho & \rho \\
\vdots & \vdots & \vdots & \vdots \\
\rho & \rho & \rho & 1
\end{bmatrix}.
\]

The simulation and conditional-error results show that the error increases with the covariance scale $\sigma^2$, which reduces the Mahalanobis separation; see~\figref{fig:comp_sym}. In contrast to the identity case, the initial decrease in error is weaker. This is due to the slower increase in the Mahalanobis distance under the dependence of common features. When $\boldsymbol{\delta}_p=\mathbf{1}_p$,
$
\Delta^2
=
\frac{p}{\sigma^2(1-\rho+\rho p)},
$
which converges to $1/(\sigma^2\rho)$ for $\rho>0$. Thus, the additional features become increasingly redundant and the population separation gradually levels off.

For $\gamma>1$, the error decreases more slowly than under identity covariance, and the minimum occurs at a larger value of $\gamma$ in the designs considered. The signal direction $\mathbf{1}_p$ is the leading eigenvector of the compound-symmetry covariance matrix, with eigenvalue $\sigma^2(1-\rho+\rho p)$. Its linear growth with $p$ limits the effective discriminatory information contributed by additional coordinates and slows the reduction in error. Since Theorem~\ref{Thm2} is proved only under isotropic covariance, the right panel of Figure~\ref{fig:comp_sym} reports the exact Gaussian conditional error rather than the asymptotic limit in Theorem~\ref{Thm2}.
\begin{figure}[!htb]
\centering
\begin{subfigure}{0.48\textwidth}
\includegraphics[width=\textwidth]{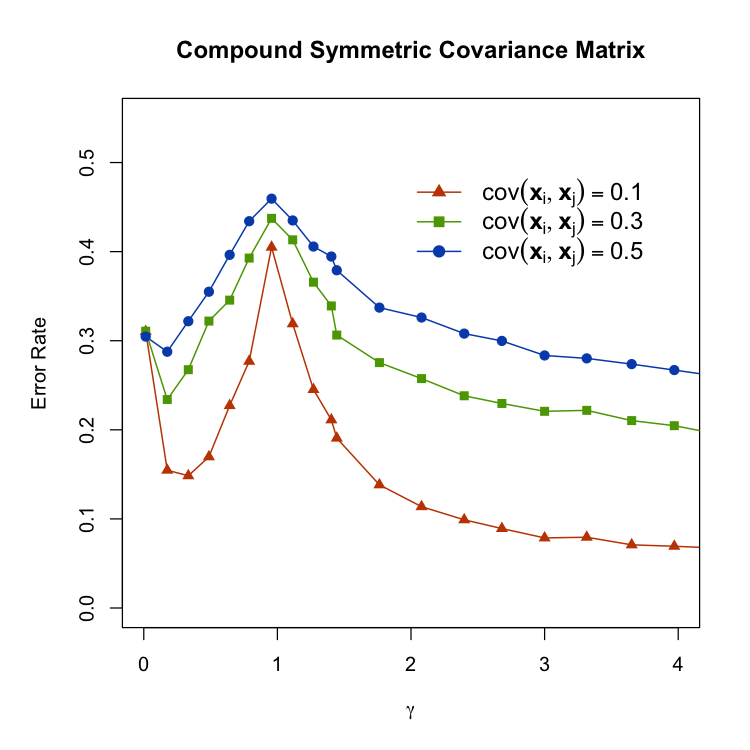}
\caption{Average test error.}
\label{fig:5.3(a)}
\end{subfigure}
\hfill
\begin{subfigure}{0.48\textwidth}
\includegraphics[width=\textwidth]{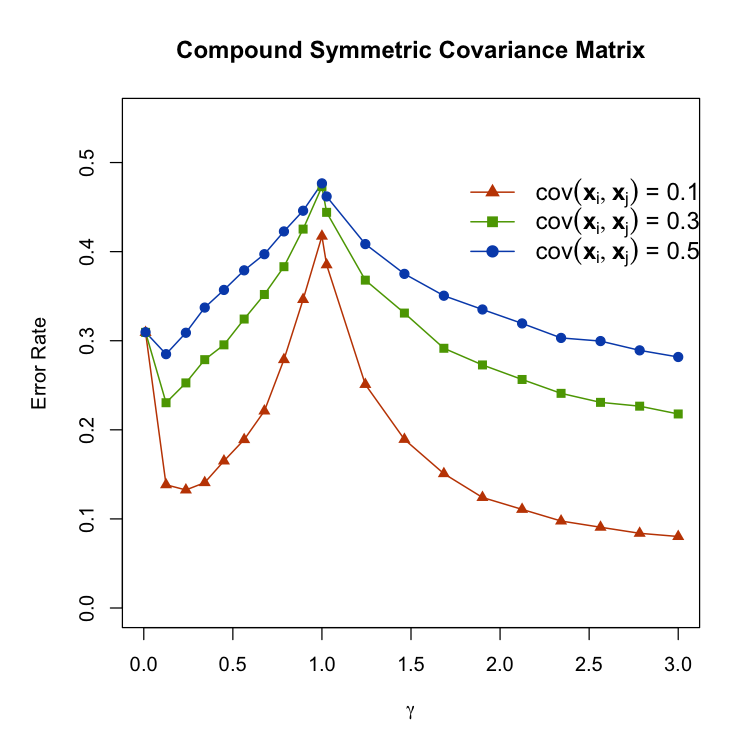}
\caption{Average conditional error.}
\label{fig:5.3(b)}
\end{subfigure}
\caption{Test and conditional errors under compound-symmetry covariance with $\boldsymbol{\delta}_p=\mathbf{1}_p$.}
\label{fig:comp_sym}
\end{figure}

\medskip
\noindent \textbf{3. First-Order Autoregressive Structure, AR(1).}
Under the AR(1) structure, the correlation between two features decreases exponentially with their distance:
\begin{flalign*}
\boldsymbol{\Sigma} &= \sigma^2
\begin{bmatrix}
1 & \rho & \rho^2 & \dots & \rho^{p-1} \\
\rho & 1 & \rho & \dots & \rho^{p-2} \\
\rho^2 & \rho & 1 & \ddots & \vdots \\
\vdots & \vdots & \ddots & \ddots & \rho \\
\rho^{p-1} & \rho^{p-2} & \dots & \rho & 1
\end{bmatrix}.
\end{flalign*}

Figure~\ref{fig:ar1} presents the average test and conditional errors. The AR(1) covariance structure behaves similarly to compound symmetry near $\gamma=1$, with the error increasing toward the threshold and decreasing in the over-parameterized regime. However, the error curves for different values of $\rho$ become closer at large $\gamma$. Unlike compound symmetry, the dependence of AR$(1)$ is local, and its population spectrum remains bounded for a fixed $|\rho|<1$.

For $\boldsymbol{\delta}_p=\mathbf{1}_p$, the squared Mahalanobis distance is
\[
\Delta^2 = \frac{p(1-\rho)+2\rho}{\sigma^2(1+\rho)},
\]
and therefore grows linearly with $p$. A larger $\rho$ reduces its growth rate, but additional coordinates continue to contribute discriminatory information. The sustained decline in the over-parameterized regime is therefore driven by both the improved conditioning of the pseudoinverse away from $\gamma=1$ and the increasing population separation. As in the identity case, this growing-signal behavior lies beyond the formal scope of Theorem~\ref{Thm2}.

\begin{figure}[!htb]
\centering
\begin{subfigure}{0.47\textwidth}
\includegraphics[width=\textwidth]{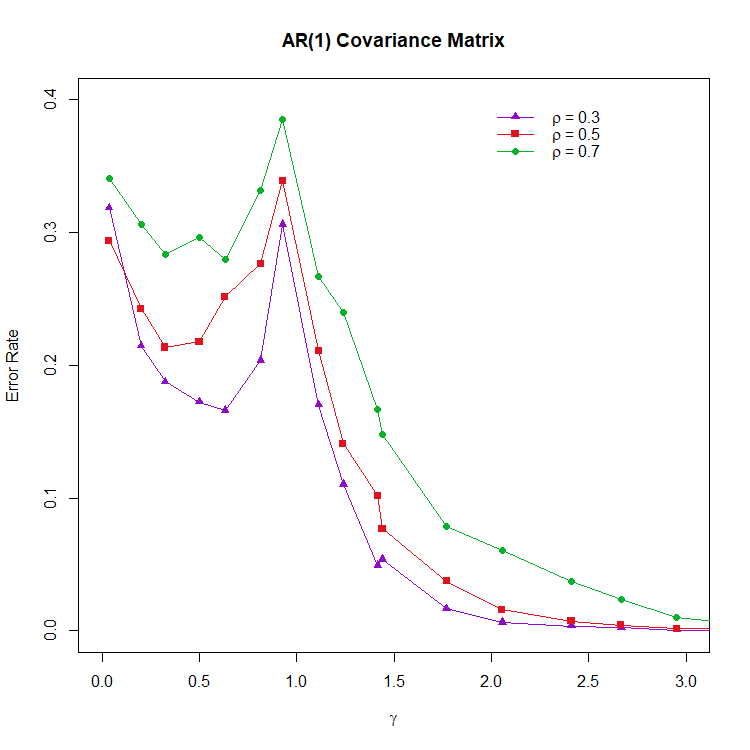}
\caption{Average test error.}
\label{fig:5.3(a_2)}
\end{subfigure}
\hfill
\begin{subfigure}{0.47\textwidth}
\includegraphics[width=\textwidth]{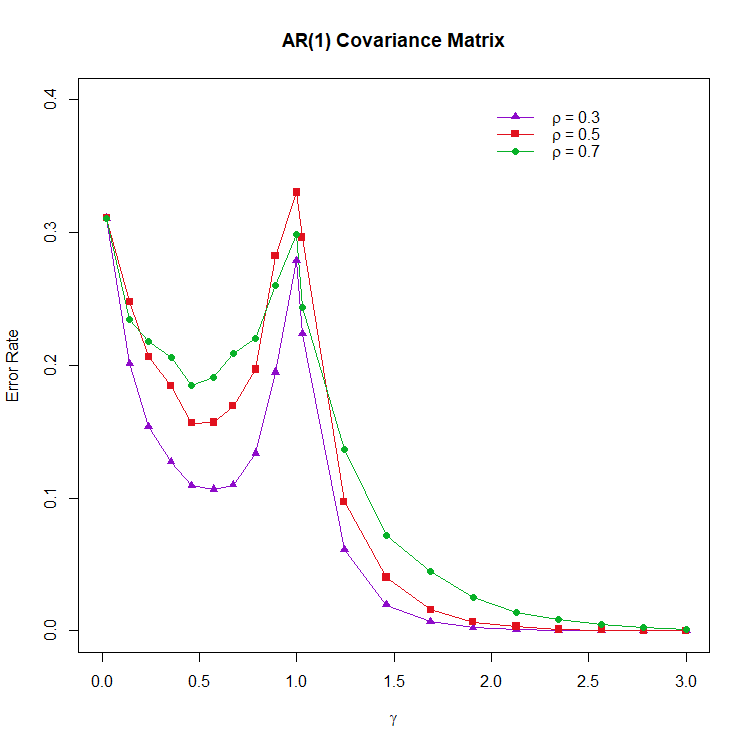}
\caption{Average conditional error.}
\label{fig:5.4}
\end{subfigure}
\caption{Test and conditional errors under AR(1) covariance for different values of $\rho$, with $\boldsymbol{\delta}_p=\mathbf{1}_p$.}
\label{fig:ar1}
\end{figure}

\subsection{Non-Gaussian Data}\label{sec:non-normal}

The true distribution of the data is often unknown in reality, and LDA can be applied to datasets that follow a non-normal distribution. Since our theoretical analysis allows individual feature vectors $\mathbf{x}_i$ to be non-Gaussian, we investigate the double descent phenomenon when the normality assumption is violated.

We first consider the case where the features follow an independently uniform and Poisson distribution (\figref{fig:unif_poiss}). We observe that LDA behaves similarly to the identity covariance matrix case. The test error peaks around $p \approx n$ and tails off asymptotically towards 0 for large $\gamma$.
\begin{figure}[!htb]
\centering
\begin{subfigure}{0.48\textwidth}
\includegraphics[width=\textwidth]{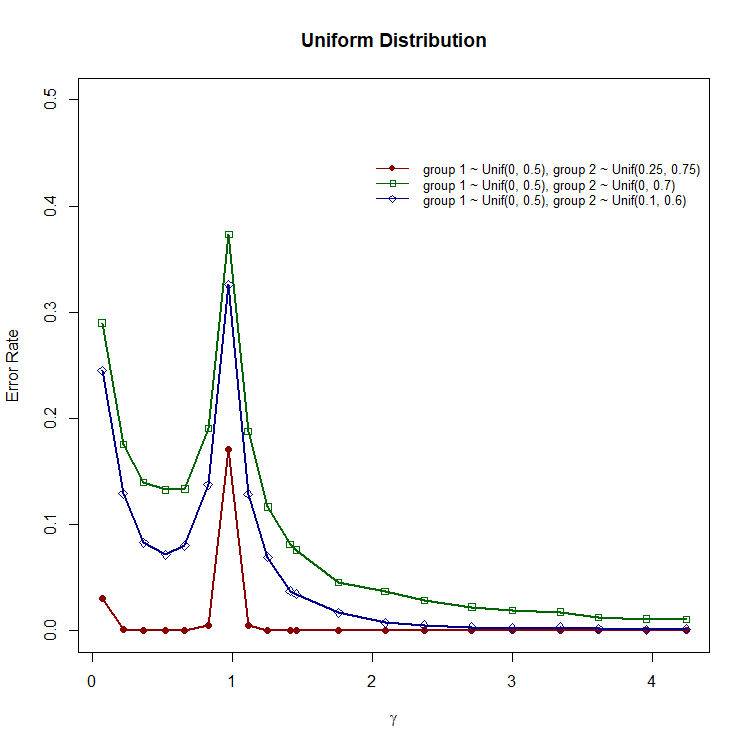}
\caption{$\mathbf{x}_{ki} \sim U[\min_k,\max_k]$, $k \in \{1,2\}$\label{fig:5.6(a)}}
\end{subfigure}
\hfill
\begin{subfigure}{0.48\textwidth}
\includegraphics[width=\textwidth]{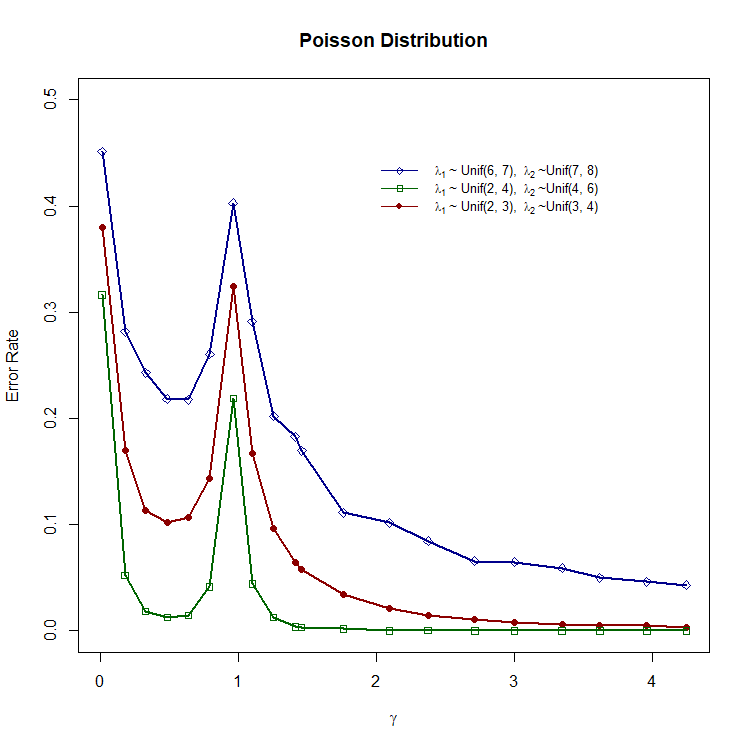}
\caption{$\mathbf{x}_{ki} \sim \text{Pois}(\lambda_k)$, $k \in \{1,2\}$}
\label{fig:5.6(b)}
\end{subfigure}
\caption{Independently distributed Features}
\label{fig:unif_poiss}
\end{figure}

The $T$-distribution has a heavier tail than the normal distribution and {violates the Gaussian assumption underlying our theory.}  However, the simulation shows that the error curve performs similarly to the normal samples, and the heavier tails contribute to larger error (\figref{fig:T-dis}).
\begin{figure}[!htb]
\centering
\begin{subfigure}{0.45\textwidth}
\includegraphics[width=\textwidth]{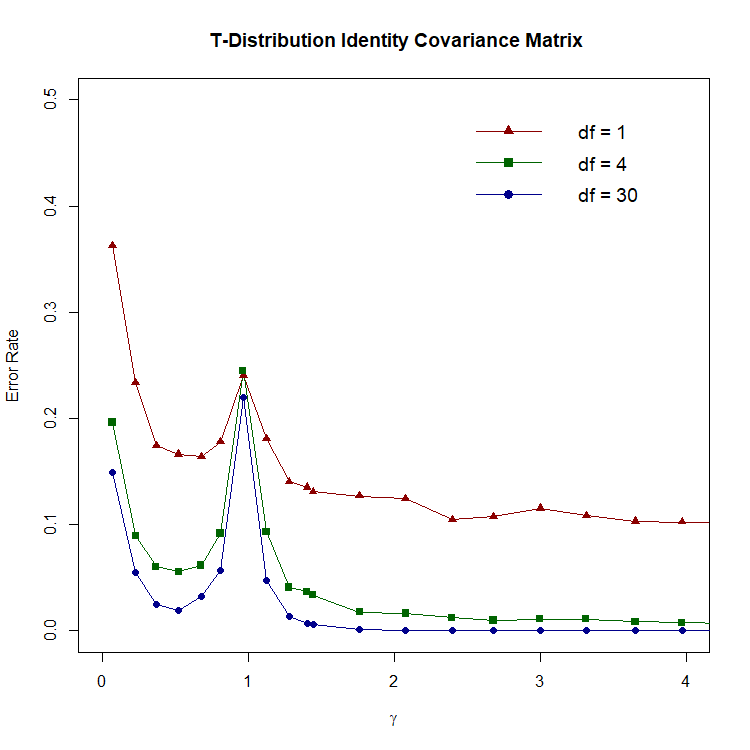}
\caption{T-distribution with identity covariance matrix}
\label{fig:5.7(a)}
\end{subfigure}
\hfill
\begin{subfigure}{0.45\textwidth}
\includegraphics[width=\textwidth]{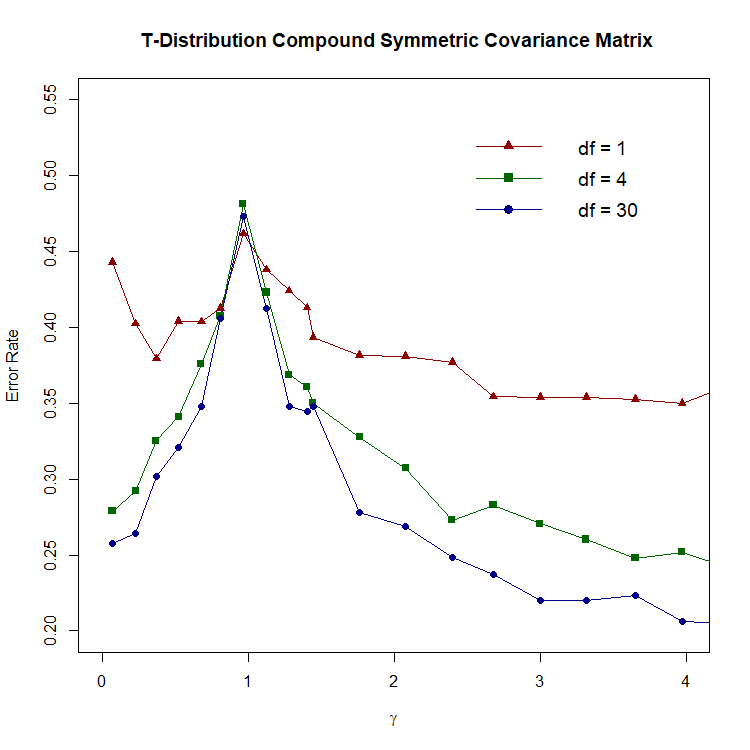}
\caption{T-distribution with compound symmetric covariance matrix ($\rho = 0.3$)}\label{fig:5.7(b)}
\end{subfigure}
\caption{$T$-distributed data}\label{fig:T-dis}
\end{figure}

For logarithmic-normal and Dirichlet features, the test error shows no systematic dependence on $\gamma$ and remains close to random guessing, with no clear double-descent pattern. This result should be interpreted as evidence of model mis-specification rather than as an intrinsic failure of high-dimensional LDA. Both distributions depart substantially from the homoscedastic Gaussian model underlying the theoretical analysis. Log-normal features are strongly right-skewed, while Dirichlet features are constrained to the simplex and exhibit dependence induced by the sum-to-one constraint. Consequently, a linear discriminant rule based only on the class means and a common covariance matrix may be poorly aligned with the Bayes classifier. Changes in the dimensional ratio alone, therefore, do not need to generate the double-descent mechanism described by Theorems~\ref{Thm1} and~\ref{Thm2}. These experiments assess the robustness of LDA beyond formal theory rather than providing a direct validation of the limiting formulas.

\subsection{Noisy Observations}\label{sec:noise}

We now consider the case where the data are affected by noise, as is often the case with real observations. We examine the cases where noise is added to both $\mathbf{X}$ and {$\mathbf{y}$}. In the first scenario, we flip a percentage of the response {$\mathbf{y}$} to the opposite sign, which makes some data have a spurious relationship with the response. In the second case, Gaussian noise is added to a percentage of data columns, which increases their variance.

In \figref{fig:noise_Y}, we notice that the flipping response raises the overall error rate but increasing $\gamma$ reduces the error rate. However, when adding varying levels of Gaussian noise to a fixed percentage of data, a higher noise level results in a more prominent increase in $\gamma$, exhibiting a second rise in error (\figref{fig:noise_X}). This suggests that model complexity can diminish the model performance when significant noise is presented in features.
\begin{figure}[!htb]
\centering
\includegraphics[scale = 0.4]{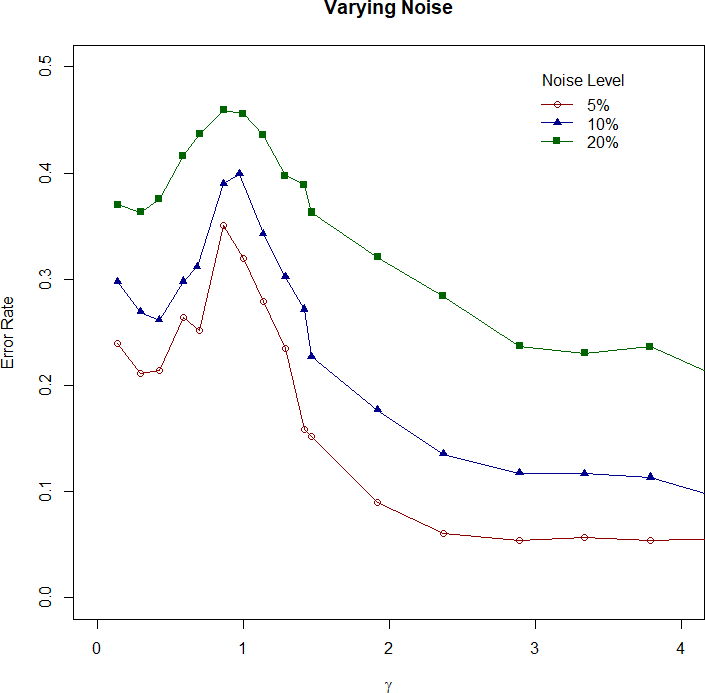}
\caption{Noisy response $Y$; $\boldsymbol{\delta}_p = 1_p$, $\boldsymbol{\Sigma} = \mathbf{I}_p$}
\label{fig:noise_Y}
\end{figure}

\begin{figure}[!htb]
\centering
\includegraphics[scale=0.4]{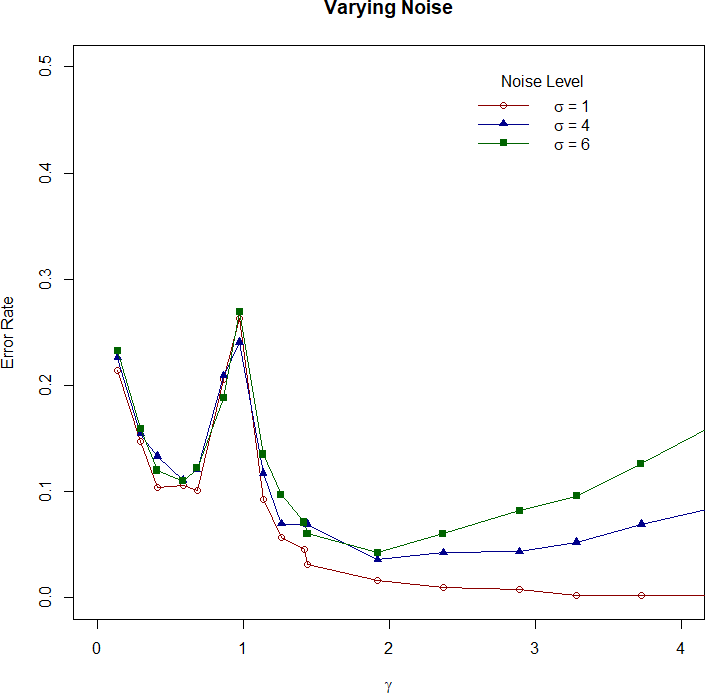}
\caption{Noisy features $ \mathbf{x}_{ki} = \mathbf{x}_{ki} + \boldsymbol{\epsilon}_i$,  $\boldsymbol{\epsilon}_i  \sim \mathcal{N}(\mathbf{0},\sigma^2\mathbf{I}_p)$;$\boldsymbol{\delta}_p = \boldsymbol{1}_p$, $\boldsymbol{\Sigma} = \mathbf{I}_p$}
\label{fig:noise_X}
\end{figure}

\subsection{Redundant Features}\label{sec:redundant}

To mimic real-life high-dimensional data, we generate a percentage of irrelevant features. The result (\figref{fig:irre}) shows that the peak in error has been postponed. This can be explained by the fact that the ``effective'' number of $p$ is smaller to achieve $p \simeq n$. However, if the more redundant feature exists, the error rate is higher at larger $\gamma$ values (in an over-parameterized regime). This observation is in accordance with that noticed by \cite{chatterji2021}: when irrelevant variables increase, the performance of the max-margin classifier degrades with $\gamma$.
\begin{figure}[!htb]
\centering
\includegraphics[scale = 0.4]{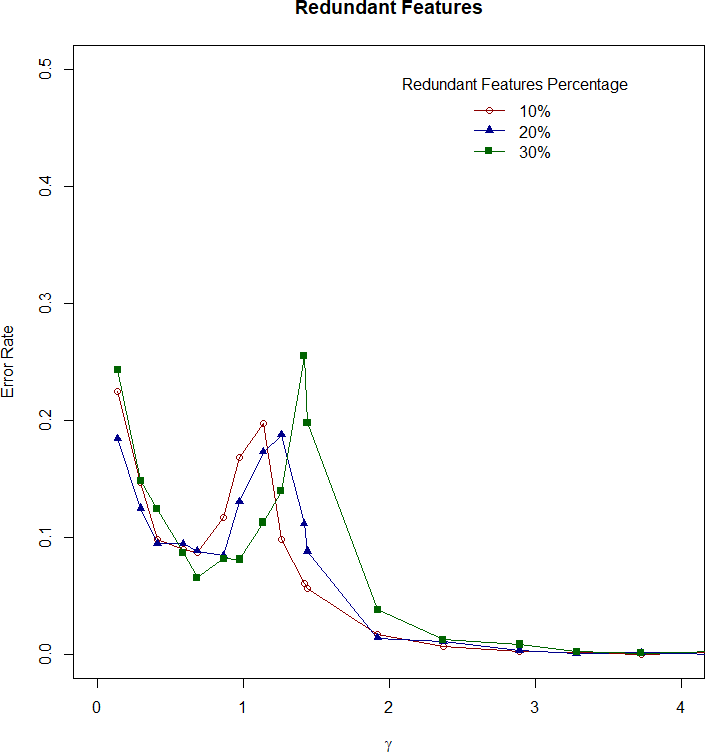}
\caption{Varying the percentage of irrelevant features (${\boldsymbol{\delta}_p} = \boldsymbol{1}_p, \boldsymbol{\Sigma} = \mathbf{I}_p$)}\label{fig:irre}
\end{figure}

In general, we verified the hypothesis that the double descent phenomenon is observed in various settings of $\boldsymbol{\Sigma}$, mild misspecification of the model, and noise. In all cases, the global minimum is attained at the over-parameterized region. However, in noisy features, the error curve in the over-parameterized regime ($\gamma >1$) can quickly ascend. Hence, large over-parameterization is likely not ideal in most real data scenarios.

\section{Empirical Data Analysis}\label{section6}

\subsection{Setup}

In practice, the true population parameters of the data are unknown, the observations can be noisy, and the data may violate various LDA's assumptions. Hence, we perform real data analysis on the ARCENE cancer classification dataset \citep{ARCENE} and compare the performance of pseudo-inverse LDA (PLDA) to dimension reduction methods including PCA, diagonal discriminant analysis (DLDA), shrunken centroid discriminant analysis (SLDA), and maximum uncertainty LDA (MLDA). PCA works by finding the eigen-decomposition of the covariance matrix to identify orthogonal directions of the data where variance is maximized. We fit the LDA model with a varying number of principal components. In DLDA \citep[see, e.g.,][]{dudoit2002}, the off-diagonal elements of the covariance matrix are set to $0$, and the inverse is found by inverting the variances. SLDA utilizes the inverse of an optimally shrunken covariance matrix estimate and is constructed from a convex linear combination of the sample covariance matrix and the identity matrix \citep{Ledoit2004}. MLDA \citep{TKG06} estimates the covariance matrix through the maximum entropy covariance selection approach. These methods have been shown to improve LDA classification performance in a high dimensional setting by better conditioning the covariance matrix inverse \citep{Sharma2015}. Therefore, we hypothesize that they are likely to perform better than LDA with pseudo-inverse in terms of error rate. The result sheds light on the benefit of overparameterization compared to the use of regularized methods in real-world applications.

The goal of the classification task in ARCENE is to distinguish cancer from normal patterns using mass spectrometry data. Mass spectrometry is a technique that identifies and quantifies the chemical composition by measuring the mass-to-charge ratio of molecules. The features are the level of proteins in human Sera for a given mass value. The response is a binary attribute with, {$y = 1$} indicating a cancer-positive patient, and {$y = -1$} indicating a cancer-negative patient.  The original training set consists of $ p =10,000$ features with $n = 100$ samples, and among them, $3000$ of the features are probe features without discriminatory power. All features are numerical and continuous. As part of the exploratory analysis, we conducted model checking. Given the size of the features, we randomly select a subset of features to perform preliminary model checks to understand the data. To compute the error rate, we randomly select a number of features to achieve $\gamma$s between 0 to 10, and repeat the experiment 50 times and average the error rate. The error rate is calculated from $\frac{1}{n_t} \sum_{i=1}^{n_t} \mathds{1}({y_{i} \neq \widehat y_{i}})$ with $n_t = 99$ and the process is repeated 30 times to obtain an average estimate.

The normality assumption underpins LDA, which requires that each variable be normally distributed. Visually inspecting the Q-Q plot of randomly selected attributes, normality is heavily violated for most variables due to the skewed distribution with clustering of data points around $0$. The Mardia skewness and kurtosis test \citep{mardia1970} on a random subset of features also supports the fact that the majority groups of variables fail the multivariate normality assumption. Moreover, the common covariance assumption for LDA has also been violated by conducting the Box's M test on a subset of features. The result shows that the $p$-value $p < 0.05$ for almost all subsets, suggesting that the null hypothesis that the covariances are equal across the groups should be rejected. Since the Mahalanobius distance implies discriminatory ability, we further calculated the Mahalanobius distance of the two groups when varying $\gamma$.~\figref{fig:mahalan_dis} shows the averaged Mahalanobis distance calculated from randomly selected features. The exponential increase in the distance suggests that over-parameterization may help with decreasing the error rate. 

\begin{figure}[!htb]
\centering
\includegraphics[scale = 0.42]{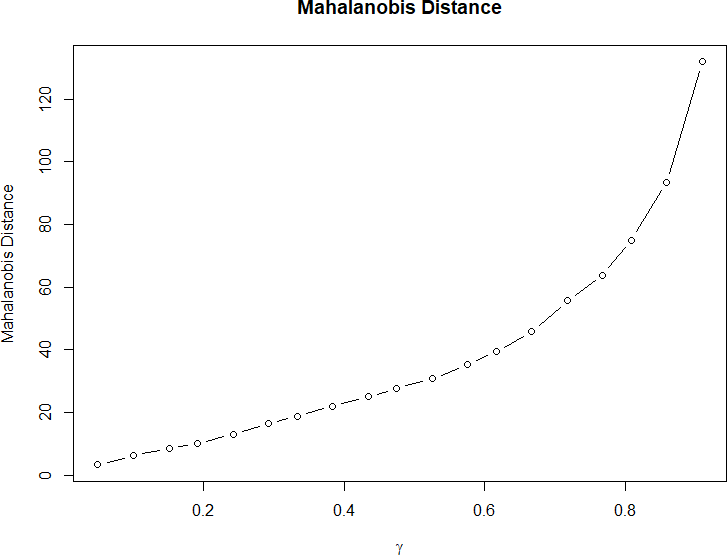}
\caption{Mahalanobis Distance for $\gamma \in (0, 1)$ (Result is averaged over 50 trials of randomly selected features)}
\label{fig:mahalan_dis}
\end{figure}

\subsection{Result}\label{sec:5.2}

\figref{fig:arcene_result} shows an error curve that exhibits double descent. In the under-parameterized regime, the model performance is subpar given the data strongly deviates from the model assumption and the classical $U$-shape error curve is observed. The error peaks at random guessing $0.5$ when $\gamma = 1$, and decreases in the over-parameterized regime when the number of features is increased to $1000$. Compared to other methods, we find that the dimension reduction methods perform similarly to PLDA. However, at large $\gamma$ values, DLDA, SLDA, and MLDA all converge to a high error rate ($\simeq 0.45$), whilst PLDA achieves a lower error (\figref{fig:comparison} and Table~\ref{tab:comparison}). In \figref{fig:PCA}, we also find that PCA achieves minimum error with around 60 principal components, but as $p$ approaches $n$, PCA has the highest error rate relative to methods like MLDA, due to the ill-conditioning of the covariance matrices.
\begin{figure}[!htb]
\centering
\includegraphics[scale = 0.37]{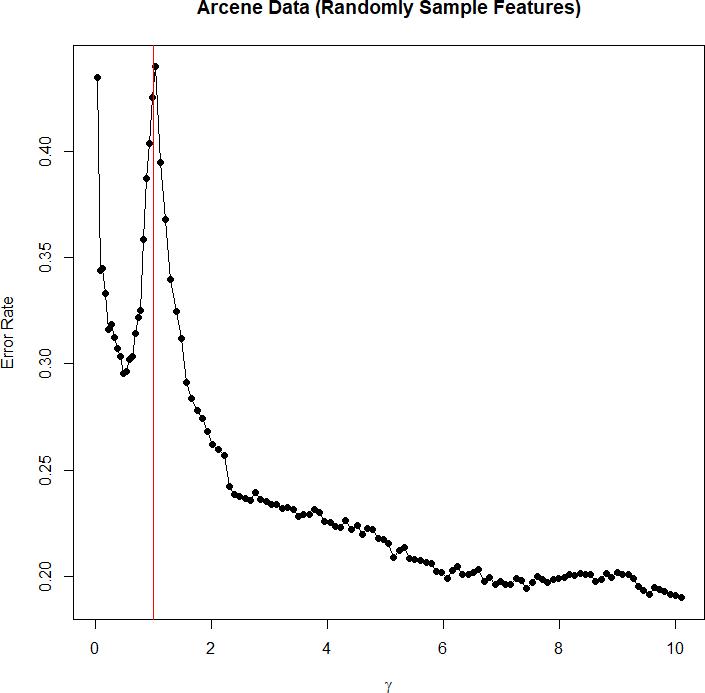}
\caption{ARCENE data observed error rate for $\gamma \in (0, 10)$. Pseudo-LDA is applied and features are randomly selected with results averaged over 50 trials}\label{fig:arcene_result}
\end{figure}

\begin{figure}[!htb]
\centering
\includegraphics[scale = 0.37]{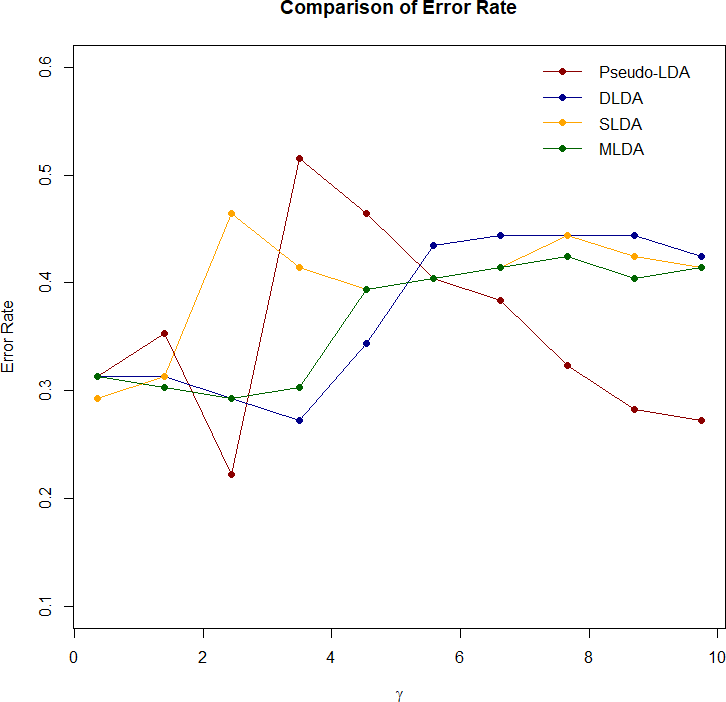}
\caption{Comparison with other methods of high dimensional LDA using originally ordered features $\gamma \in (0, 10)$}\label{fig:comparison}
\end{figure}

Using originally ordered features and comparing PLDA error rates to other high dimensional LDA techniques, we can see that the behavior of the error rate is drastically different. In the underparameterized regime, PLDA has the highest error rate relative to methods such as MLDA, due to the ill-conditioning of covariance matrices as $p$ approaches $n$. However, at large $\gamma$ values, DLDA, SLDA and MLDA all converge to a high error rate ($\simeq 0.45$), whilst PLDA has a continuously decreasing error rate after a certain point. This is confirmed by the table of minimum error, pseudo-inverse's performance is most competitive in the over-parameterized regime. The overall lowest error is also achieved by PLDA.
\begin{table}[!htb]
\centering
\begin{tabular}{|c|c|c|} \hline 
&  Under-parameterized Minimum Error &  Over-parameterized Minimum Error \\ \hline 
PLDA&  0.293 ($\gamma$ = 0.7)&  0.202 ($\gamma$ = 2.4)\\ \hline 
DLDA&  0.303 ($\gamma$= 0.7)&  0.273 ($\gamma$ = 3.5)\\ \hline 
SLDA&  0.293 ($\gamma$ = 0.4)&  0.253 ($\gamma$ = 2.1)\\ \hline
MLDA& 0.212 ($\gamma$= 0.7)& 0.253 ($\gamma$ = 1.1)\\\hline
\end{tabular}
\caption{Error rate comparison for different High Dimensional LDA methods}\label{tab:comparison}
\end{table}

\begin{figure}[!htb]
\centering
\includegraphics[scale = 0.33]{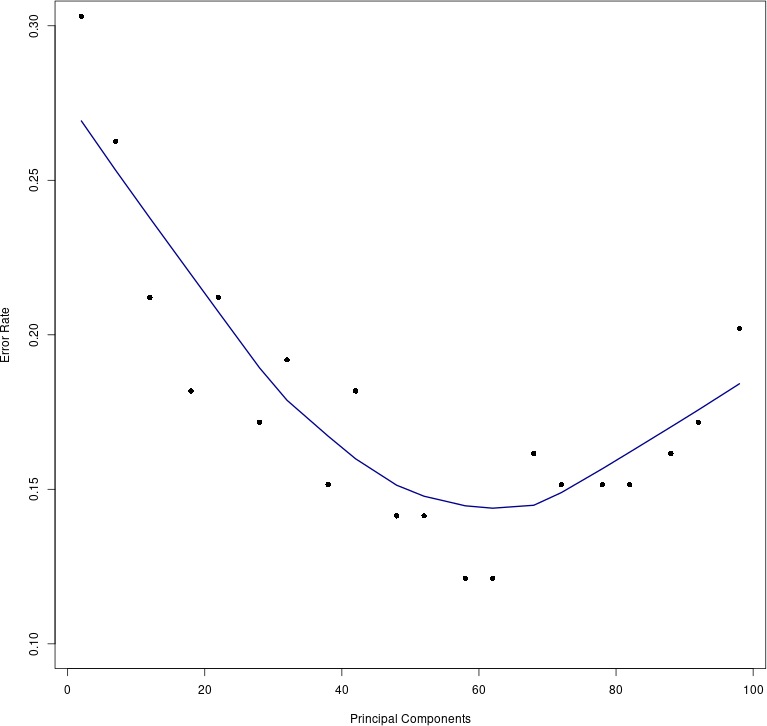}
\caption{LDA Error Rate on ARCENE data with features as principal components}\label{fig:PCA}
\end{figure}

\section{Conclusion}\label{sec:conclusion}

This paper studies the mis-classification risk of linear discriminant analysis in the proportional-growth regime $p/n\to\gamma$. Our results show that the risk can depend non-monotonically on the dimensionality-to-sample-size ratio, producing a double-descent pattern even for this classical linear method. Under the homoscedastic Gaussian model, Theorem~\ref{Thm1} gives an explicit limiting error formula in the under-parameterized regime for a general deterministic positive-definite covariance matrix. Under isotropic covariance, Theorem~\ref{Thm2} extends the analysis to the over-parameterized regime for the Moore--Penrose pseudoinverse LDA classifier. In the common isotropic setting, the two limits join continuously at $\gamma=1$. The limiting formulas separate the effects of mean and covariance estimation and clarify how the dimensional ratio and the Mahalanobis distance jointly determine the classification performance. The simulations support the theoretical predictions in the settings covered by the theory and examine whether similar patterns persist beyond it, under more general covariance structures, non-Gaussian feature distributions, and label and feature noise. The ARCENE analysis further illustrates that non-monotone risk behavior can arise in a traditional statistical classification procedure. Together, these findings show that double descent is not exclusive to highly over-parameterized nonlinear models. To facilitate reproducibility, the \Rlogo\ code used to generate all figures is available at \url{https://github.com/hanshang/benign_overfitting}.

Several directions merit further investigation. The most immediate is to establish the over-parameterized limit under a general covariance structure, since Theorem~\ref{Thm2} currently covers only the isotropic case. The second direction is to relax the Gaussian assumption to weaker distributional conditions. Incorporating regularized or shrinkage-based LDA and allowing class-dependent covariance structures are also useful extensions. More broadly, our results suggest that random matrix theory provides a useful framework to understand the risk of high-dimensional classification methods.

\section*{Acknowledgment}

The authors thank the insightful comments of the two reviewers. We are also grateful for financial support from an Australian Research Council Discovery Project (DP230102250), and to Mr. Ziyue (Humphrey) Yang from the Australian National University for his help with the \Rlogo \ code. 

\newpage
\appendix

\renewcommand{\thefigure}{\thesection.\arabic{figure}}
\setcounter{figure}{0}
\section{Theoretical Proofs}\label{appendix1}

We collect here the notation used in the proofs. Throughout, \(\|\boldsymbol{A}\|\) denotes the spectral norm of a matrix \(\boldsymbol{A}\), and \(\|\boldsymbol{a}\|\) denotes the Euclidean norm of a vector \(\boldsymbol{a}\). We write \(a_n\asymp b_n\) if there exist constants \(0<c<C<\infty\) such that \(c\,b_n\le a_n\le C\,b_n\) for all sufficiently large \(n\). Likewise, \(a_n\lesssim b_n\) means \(a_n\le C\,b_n\) for some constant \(C>0\), and \(a_n\gtrsim b_n\) means \(a_n\ge c\,b_n\) for some constant \(c>0\). For symmetric matrices \(\boldsymbol{A}\) and \(\boldsymbol{B}\), the notation \(\boldsymbol{A}\succeq \boldsymbol{B}\) means that \(\boldsymbol{A}-\boldsymbol{B}\) is positive semidefinite.

\begin{proof}[Proof of Theorem~\ref{Thm1}]
Throughout the proof we work under Assumptions~\ref{ASSU:ratio}--\ref{ASSU:signal} with $\gamma\in(0,1)$. The proof starts from the exact conditional error formula in~\eqref{eq:lda_error}. Since the observations are Gaussian, conditional on the training sample the LDA score of an independent test point is Gaussian, so it remains only to find the asymptotic limit of the standardized score inside $\Phi$: $R_{\mathrm{LDA}}=\frac12\sum_{k=1}^{2}\Phi(A_k)$, where
\[
A_k:=
\frac{(-1)^k\left(\boldsymbol{\mu}_k-\frac{\widehat{\boldsymbol{\mu}}_1+\widehat{\boldsymbol{\mu}}_2}{2}\right)^{\top}\widehat{\boldsymbol{\Sigma}}^{-1}(\widehat{\boldsymbol{\mu}}_1-\widehat{\boldsymbol{\mu}}_2)}
{\left[(\widehat{\boldsymbol{\mu}}_1-\widehat{\boldsymbol{\mu}}_2)^{\top}\widehat{\boldsymbol{\Sigma}}^{-1}\boldsymbol{\Sigma}\widehat{\boldsymbol{\Sigma}}^{-1}(\widehat{\boldsymbol{\mu}}_1-\widehat{\boldsymbol{\mu}}_2)\right]^{1/2}},
\qquad k=1,2.
\]
By the balanced design $n_1=n_2$, the two class arguments have the same limit, so we focus on $A_1$.

\smallskip\noindent\underline{\textit{Whitening and reduction.}}
Let $\boldsymbol{b}:=\boldsymbol{\Sigma}^{-1/2}(\boldsymbol{\mu}_1-\boldsymbol{\mu}_2)$, so that $\|\boldsymbol{b}\|=\Delta$, and define the whitened class-mean errors $\mathbf a_k:=\boldsymbol{\Sigma}^{-1/2}(\widehat{\boldsymbol{\mu}}_k-\boldsymbol{\mu}_k)$, $k=1,2$. Under the Gaussian model, $\mathbf a_k\sim\mathcal N(\mathbf 0,n_k^{-1}\mathbf I_p)$, and the class sample means are independent of the pooled within-class covariance. Since $n_1=n_2=n/2$, we have $\mathbf a_k\sim\mathcal N(\mathbf 0,2n^{-1}\mathbf I_p)$. Setting $\mathbf a_-:=\mathbf a_1-\mathbf a_2$ and $\mathbf a_+:=\mathbf a_1+\mathbf a_2$, the equal-size design gives $\operatorname{Cov}(\mathbf a_+,\mathbf a_-)=\operatorname{Var}(\mathbf a_1)-\operatorname{Var}(\mathbf a_2)=\mathbf 0$, and under the Gaussian model zero covariance implies independence, so $\mathbf a_+$ and $\mathbf a_-$ are independent centered Gaussian vectors, each with covariance $4n^{-1}\mathbf I_p$.

Stack the whitened centered observations $\mathbf z_{ki}:=\boldsymbol{\Sigma}^{-1/2}(\mathbf x_{ki}-\boldsymbol{\mu}_k)$ into an $n\times p$ matrix $\mathbf Z$. The within-class demeaning matrix is
\[
\mathbf H=
\begin{pmatrix}
\mathbf I_{n_1}-n_1^{-1}\mathbf{1}_{n_1}\mathbf{1}_{n_1}^{\top} & 0\\
0 & \mathbf I_{n_2}-n_2^{-1}\mathbf{1}_{n_2}\mathbf{1}_{n_2}^{\top}
\end{pmatrix},
\]
where $\mathbf{1}_m$ denotes the $m$-dimensional vector of ones, so that $\widehat{\boldsymbol{\Sigma}}=n^{-1}\boldsymbol{\Sigma}^{1/2}\mathbf Z^{\top}\mathbf H\mathbf Z\boldsymbol{\Sigma}^{1/2}$. Define the whitened pooled covariance $\mathbf S:=\boldsymbol{\Sigma}^{-1/2}\widehat{\boldsymbol{\Sigma}}\boldsymbol{\Sigma}^{-1/2}=n^{-1}\mathbf Z^{\top}\mathbf H\mathbf Z$, so that $\boldsymbol{\Sigma}^{-1/2}(\widehat{\boldsymbol{\mu}}_1-\widehat{\boldsymbol{\mu}}_2)=\boldsymbol{b}+\mathbf a_-$. The vector $\mathbf a_-$ has covariance $4n^{-1}\mathbf I_p$, which is the source of the $4\gamma$ term in the final limit. After whitening, the class-$1$ numerator becomes $\frac12(\boldsymbol{b}-\mathbf a_+)^{\top}\mathbf S^{-1}(\boldsymbol{b}+\mathbf a_-)$ and the squared denominator becomes $(\boldsymbol{b}+\mathbf a_-)^{\top}\mathbf S^{-2}(\boldsymbol{b}+\mathbf a_-)$. Expanding, define
\[
D_1:=\boldsymbol{b}^{\top}\mathbf S^{-1}\boldsymbol{b},\quad
D_2:=\boldsymbol{b}^{\top}\mathbf S^{-1}\mathbf a_-,\quad
D_3:=\mathbf a_+^{\top}\mathbf S^{-1}\boldsymbol{b},\quad
D_4:=\mathbf a_+^{\top}\mathbf S^{-1}\mathbf a_-,
\]
\[
D_5:=\boldsymbol{b}^{\top}\mathbf S^{-2}\boldsymbol{b},\quad
D_6:=\mathbf a_-^{\top}\mathbf S^{-2}\mathbf a_-,\quad
D_7:=\boldsymbol{b}^{\top}\mathbf S^{-2}\mathbf a_-,
\]
so that
\begin{equation}\label{Eqn:A_in_Dterms}
A_1=-\frac{\frac12(D_1+D_2-D_3-D_4)}{\sqrt{D_5+2D_7+D_6}},
\end{equation}
where the minus sign comes from $(-1)^1=-1$.

\smallskip\noindent\underline{\textit{Spectral properties of $\mathbf S$.}}
Since $\mathbf Z$ has i.i.d.\ standard Gaussian entries and $\mathbf H$ is a deterministic projection of rank $n-2$, the matrix $\mathbf Z^{\top}\mathbf H\mathbf Z$ has the Wishart distribution $W_p(n-2,\mathbf I_p)$; in particular, the eigenvector matrix of $\mathbf S$ is Haar distributed and independent of the eigenvalues, and the normalization by $n$ rather than $n-2$ is asymptotically immaterial. Hence the empirical spectral distribution of $\mathbf S$ converges to the Mar\v{c}enko--Pastur law with aspect ratio $\gamma$, and by the Bai--Yin theorem $\lambda_{\min}(\mathbf S)\convergeInP(1-\sqrt\gamma)^{2}>0$, so the spectrum of $\mathbf S$ is bounded away from zero with probability tending to one. Consequently $\mathbf S$ is non-singular with probability tending to one, with $\widehat{\boldsymbol{\Sigma}}^{-1}=\boldsymbol{\Sigma}^{-1/2}\mathbf S^{-1}\boldsymbol{\Sigma}^{-1/2}$ and $\widehat{\boldsymbol{\Sigma}}^{-1}\boldsymbol{\Sigma}\widehat{\boldsymbol{\Sigma}}^{-1}=\boldsymbol{\Sigma}^{-1/2}\mathbf S^{-2}\boldsymbol{\Sigma}^{-1/2}$, and combining the spectral-gap bound with the Mar\v{c}enko--Pastur convergence gives
\[
\frac1p\operatorname{tr}(\mathbf S^{-1})\convergeInP\frac1{1-\gamma},
\qquad
\frac1p\operatorname{tr}(\mathbf S^{-2})\convergeInP\frac1{(1-\gamma)^3},
\qquad
\operatorname{tr}(\mathbf S^{-j})=O_p(p),\quad j=1,\ldots,4.
\]

\smallskip\noindent\underline{\textit{Signal terms $D_1$ and $D_5$.}}
Since the eigenvectors of $\mathbf S$ are Haar distributed and independent of the eigenvalues, standard concentration of Haar quadratic forms gives, for the deterministic unit vector $\boldsymbol{\nu}:=\boldsymbol{b}/\|\boldsymbol{b}\|$,
\[
\boldsymbol{\nu}^{\top}\mathbf S^{-1}\boldsymbol{\nu}-\frac1p\operatorname{tr}(\mathbf S^{-1})\convergeInP0,
\qquad
\boldsymbol{\nu}^{\top}\mathbf S^{-2}\boldsymbol{\nu}-\frac1p\operatorname{tr}(\mathbf S^{-2})\convergeInP0.
\]
Together with the trace limits above and $\|\boldsymbol{b}\|^{2}=\Delta^{2}=O(1)$ by Assumption~\ref{ASSU:signal},
\[
D_1-\frac{\Delta^{2}}{1-\gamma}\convergeInP0,
\qquad
D_5-\frac{\Delta^{2}}{(1-\gamma)^{3}}\convergeInP0.
\]

\smallskip\noindent\underline{\textit{Cross terms $D_2$, $D_3$, $D_7$.}}
The Gaussian assumption gives independence between $\mathbf S$ and the sample-mean errors. Conditional on $\mathbf S$, $\mathbb E(D_2\mid\mathbf S)=0$ and $\operatorname{Var}(D_2\mid\mathbf S)=\frac4n\boldsymbol{b}^{\top}\mathbf S^{-2}\boldsymbol{b}=\frac4nD_5=O_p(n^{-1})$, so $D_2=o_p(1)$; the same argument gives $D_3=o_p(1)$. For $D_7$, $\mathbb E(D_7\mid\mathbf S)=0$ and $\operatorname{Var}(D_7\mid\mathbf S)=\frac4n\boldsymbol{b}^{\top}\mathbf S^{-4}\boldsymbol{b}\le\frac4n\lambda_{\min}(\mathbf S)^{-4}\Delta^{2}=O_p(n^{-1})$, so $D_7=o_p(1)$.

\smallskip\noindent\underline{\textit{Cross term $D_4$.}}
Conditional on $\mathbf S$, $\mathbf a_+$ and $\mathbf a_-$ are independent and centered, so $\mathbb E(D_4\mid\mathbf S)=0$ and $\operatorname{Var}(D_4\mid\mathbf S)=\frac{16}{n^{2}}\operatorname{tr}(\mathbf S^{-2})=O_p(p/n^{2})=O_p(n^{-1})$, and therefore $D_4=o_p(1)$.

\smallskip\noindent\underline{\textit{Noise term $D_6$.}}
Conditional on $\mathbf S$,
\[
\mathbb E(D_6\mid\mathbf S)=\frac4n\operatorname{tr}(\mathbf S^{-2})\convergeInP\frac{4\gamma}{(1-\gamma)^3},
\qquad
\operatorname{Var}(D_6\mid\mathbf S)=2\Bigl(\frac4n\Bigr)^2\operatorname{tr}(\mathbf S^{-4})=O_p(p/n^{2})=o_p(1),
\]
using $p/n\to\gamma$, $p^{-1}\operatorname{tr}(\mathbf S^{-2})\convergeInP(1-\gamma)^{-3}$ and $\operatorname{tr}(\mathbf S^{-4})=O_p(p)$. Thus $D_6\convergeInP4\gamma/(1-\gamma)^{3}$.

\smallskip\noindent\underline{\textit{Conclusion.}}
Substituting these limits into~\eqref{Eqn:A_in_Dterms}, the numerator satisfies
\[
-\frac12(D_1+D_2-D_3-D_4)+\frac{\Delta^{2}}{2(1-\gamma)}\convergeInP0,
\]
while the squared denominator satisfies
\[
D_5+2D_7+D_6-\frac{\Delta^{2}+4\gamma}{(1-\gamma)^{3}}\convergeInP0.
\]
Since the limit of the squared denominator is bounded below by $4\gamma/(1-\gamma)^{3}>0$, Slutsky's theorem gives
\[
A_1+\frac{\sqrt{1-\gamma}\,\Delta^{2}}{2\sqrt{\Delta^{2}+4\gamma}}\convergeInP0.
\]
By the symmetry $n_1=n_2$, the class-2 argument $A_2$ has the same asymptotic limit, and since $\Phi$ is Lipschitz continuous,
\[
R_{\mathrm{LDA}}-\Phi\!\left(-\frac{\sqrt{1-\gamma}\,\Delta^{2}}{2\sqrt{\Delta^{2}+4\gamma}}\right)\convergeInP0.
\]
This completes the proof of Theorem~\ref{Thm1}.
\end{proof}

\begin{proof}[Proof of Theorem~\ref{Thm2}]
Throughout the proof, we work under Assumptions~\ref{ASSU:ratio}--\ref{ASSU:signal} with $\gamma\in(1,\infty)$ and $\boldsymbol{\Sigma}=\sigma^2\mathbf I_p$. We adopt the notation from the proof of Theorem~\ref{Thm1}: $\boldsymbol{b}$, $\mathbf a_k$, $\mathbf a_{\pm}$, $\mathbf Z$, $\mathbf H$, and $\mathbf S=n^{-1}\mathbf Z^{\top}\mathbf H\mathbf Z$ retain their definitions and distributional properties. In particular, $\mathbf a_+$ and $\mathbf a_-$ are independent centered Gaussian vectors with covariance $4n^{-1}\mathbf I_p$ and are independent of $\mathbf S$. Moreover, $\mathbf S$ is orthogonally invariant, so conditional on its eigenvalues, standard concentration results for Haar quadratic forms apply to spectral functions of $\mathbf S$, including $\mathbf S^{+}$ and $(\mathbf S^{+})^2$. As before, $R_{\mathrm{LDA}}=\frac12\sum_{k=1}^{2}\Phi(A_k)$, where $A_k$ is defined as in the proof of Theorem~\ref{Thm1} with $\widehat{\boldsymbol{\Sigma}}^{-1}$ replaced by $\widehat{\boldsymbol{\Sigma}}^{+}$. We focus on $A_1$.

\smallskip\noindent\underline{\textit{Whitening and reduction.}}
In the isotropic case $\mathbf S=\sigma^{-2}\widehat{\boldsymbol{\Sigma}}$, and the Moore--Penrose inverse satisfies $(\sigma^{2}\mathbf S)^{+}=\sigma^{-2}\mathbf S^{+}$. Hence $\widehat{\boldsymbol{\Sigma}}^{+}=\boldsymbol{\Sigma}^{-1/2}\mathbf S^{+}\boldsymbol{\Sigma}^{-1/2}$ and $\widehat{\boldsymbol{\Sigma}}^{+}\boldsymbol{\Sigma}\widehat{\boldsymbol{\Sigma}}^{+}=\boldsymbol{\Sigma}^{-1/2}(\mathbf S^{+})^{2}\boldsymbol{\Sigma}^{-1/2}$, so all $\sigma$ factors cancel from $A_1$ and the decomposition~\eqref{Eqn:A_in_Dterms} carries over with $\mathbf S^{-1}$ and $\mathbf S^{-2}$ replaced by $\mathbf S^{+}$ and $(\mathbf S^{+})^{2}$:
\begin{equation}\label{Eqn:A_in_Dterms_over}
A_1=-\frac{\frac12(D_1^{+}+D_2^{+}-D_3^{+}-D_4^{+})}{\sqrt{D_5^{+}+2D_7^{+}+D_6^{+}}},
\end{equation}
where
\[
D_1^{+}:=\boldsymbol{b}^{\top}\mathbf S^{+}\boldsymbol{b},\quad
D_2^{+}:=\boldsymbol{b}^{\top}\mathbf S^{+}\mathbf a_-,\quad
D_3^{+}:=\mathbf a_+^{\top}\mathbf S^{+}\boldsymbol{b},\quad
D_4^{+}:=\mathbf a_+^{\top}\mathbf S^{+}\mathbf a_-,
\]
\[
D_5^{+}:=\boldsymbol{b}^{\top}(\mathbf S^{+})^{2}\boldsymbol{b},\quad
D_6^{+}:=\mathbf a_-^{\top}(\mathbf S^{+})^{2}\mathbf a_-,\quad
D_7^{+}:=\boldsymbol{b}^{\top}(\mathbf S^{+})^{2}\mathbf a_-.
\]

\smallskip\noindent\underline{\textit{Spectral properties of $\mathbf S^{+}$.}}
Since $\gamma>1$, $\mathbf S$ is singular, with $p-(n-2)$ zero eigenvalues, and the empirical spectral distribution of $\mathbf S$ converges to the Mar\v{c}enko--Pastur law $F_\gamma$, which places mass $1-\gamma^{-1}$ at the origin and mass $\gamma^{-1}$ on the bulk component supported on $[(\sqrt\gamma-1)^{2},(\sqrt\gamma+1)^{2}]$. By the Bai--Yin theorem, the smallest nonzero eigenvalue satisfies $\lambda_{\min}^{+}(\mathbf S)\convergeInP(\sqrt\gamma-1)^{2}>0$, so the nonzero spectrum of $\mathbf S$ is bounded away from zero with probability tending to one and $\|\mathbf S^{+}\|=[\lambda_{\min}^{+}(\mathbf S)]^{-1}=O_p(1)$. Combining this spectral gap with the Mar\v{c}enko--Pastur convergence gives
\[
\frac1p\operatorname{tr}(\mathbf S^{+})\convergeInP\int_{x>0}\frac{1}{x}\,dF_\gamma(x)=\frac{1}{\gamma(\gamma-1)},
\qquad
\frac1p\operatorname{tr}\bigl((\mathbf S^{+})^{2}\bigr)\convergeInP\int_{x>0}\frac{1}{x^{2}}\,dF_\gamma(x)=\frac{1}{(\gamma-1)^{3}},
\]
together with $\operatorname{tr}\bigl((\mathbf S^{+})^{j}\bigr)=O_p(p)$ for $j=1,\ldots,4$. The two integrals follow from the companion Stieltjes transform $\underline{m}(z)$ of $F_\gamma$: in the isotropic case the Mar\v{c}enko--Pastur equation gives $\lim_{z\uparrow0}\underline{m}(z)=(\gamma-1)^{-1}$, and expanding $\underline{m}(z)$ to first order at the origin yields the two values above.

\smallskip\noindent\underline{\textit{Signal terms $D_1^{+}$ and $D_5^{+}$.}}
On the event that the nonzero spectrum of $\mathbf S$ is bounded away from zero, the spectral functions $x^{-1}\mathbf 1\{x>0\}$ and $x^{-2}\mathbf 1\{x>0\}$ are uniformly bounded. Since $\mathbf S$ is orthogonally invariant, conditional on its eigenvalues, standard concentration of Haar quadratic forms applies to the uniquely defined spectral functions $\mathbf S^{+}$ and $(\mathbf S^{+})^2$. Therefore, for $\boldsymbol{\nu}:=\boldsymbol{b}/\|\boldsymbol{b}\|$,
\[
\boldsymbol{\nu}^{\top}\mathbf S^{+}\boldsymbol{\nu}
-\frac1p\operatorname{tr}(\mathbf S^{+})
\convergeInP0,
\qquad
\boldsymbol{\nu}^{\top}(\mathbf S^{+})^2\boldsymbol{\nu}
-\frac1p\operatorname{tr}\bigl((\mathbf S^{+})^2\bigr)
\convergeInP0.
\]

Together with the trace limits above and $\Delta^{2}=O(1)$,
\[
D_1^{+}-\frac{\Delta^{2}}{\gamma(\gamma-1)}\convergeInP0,
\qquad
D_5^{+}-\frac{\Delta^{2}}{(\gamma-1)^{3}}\convergeInP0.
\]

\smallskip\noindent\underline{\textit{Cross terms $D_2^{+}$, $D_3^{+}$, $D_7^{+}$.}}
Conditional on $\mathbf S$, $\mathbb E(D_2^{+}\mid\mathbf S)=0$ and $\operatorname{Var}(D_2^{+}\mid\mathbf S)=\frac4n\boldsymbol{b}^{\top}(\mathbf S^{+})^{2}\boldsymbol{b}=\frac4nD_5^{+}=O_p(n^{-1})$, so $D_2^{+}=o_p(1)$; the same argument gives $D_3^{+}=o_p(1)$. For $D_7^{+}$, $\mathbb E(D_7^{+}\mid\mathbf S)=0$ and $\operatorname{Var}(D_7^{+}\mid\mathbf S)=\frac4n\boldsymbol{b}^{\top}(\mathbf S^{+})^{4}\boldsymbol{b}\le\frac4n\|\mathbf S^{+}\|^{4}\Delta^{2}=O_p(n^{-1})$, so $D_7^{+}=o_p(1)$.

\smallskip\noindent\underline{\textit{Cross term $D_4^{+}$.}}
Conditional on $\mathbf S$, $\mathbf a_+$ and $\mathbf a_-$ are independent and centered, so $\mathbb E(D_4^{+}\mid\mathbf S)=0$ and $\operatorname{Var}(D_4^{+}\mid\mathbf S)=\frac{16}{n^{2}}\operatorname{tr}\bigl((\mathbf S^{+})^{2}\bigr)=O_p(p/n^{2})=O_p(n^{-1})$, and therefore $D_4^{+}=o_p(1)$.

\smallskip\noindent\underline{\textit{Noise term $D_6^{+}$.}}
Conditional on $\mathbf S$,
\[
\mathbb E(D_6^{+}\mid\mathbf S)=\frac4n\operatorname{tr}\bigl((\mathbf S^{+})^{2}\bigr)\convergeInP\frac{4\gamma}{(\gamma-1)^{3}},
\qquad
\operatorname{Var}(D_6^{+}\mid\mathbf S)=2\Bigl(\frac4n\Bigr)^{2}\operatorname{tr}\bigl((\mathbf S^{+})^{4}\bigr)=o_p(1),
\]
using $p/n\to\gamma$, $p^{-1}\operatorname{tr}((\mathbf S^{+})^{2})\convergeInP(\gamma-1)^{-3}$ and $\operatorname{tr}((\mathbf S^{+})^{4})=O_p(p)$. Thus $D_6^{+}\convergeInP4\gamma/(\gamma-1)^{3}$.

\smallskip\noindent\underline{\textit{Conclusion.}}
Substituting these limits into~\eqref{Eqn:A_in_Dterms_over}, the numerator satisfies
\[
-\frac12(D_1^{+}+D_2^{+}-D_3^{+}-D_4^{+})+\frac{\Delta^{2}}{2\gamma(\gamma-1)}\convergeInP0,
\]
while the squared denominator satisfies
\[
D_5^{+}+2D_7^{+}+D_6^{+}-\frac{\Delta^{2}+4\gamma}{(\gamma-1)^{3}}\convergeInP0.
\]
Since the limit of the squared denominator is bounded below by $4\gamma/(\gamma-1)^{3}>0$, Slutsky's theorem gives
\[
A_1+\frac{\Delta^{2}/\bigl(2\gamma(\gamma-1)\bigr)}{\sqrt{(\Delta^{2}+4\gamma)/(\gamma-1)^{3}}}
=A_1+\frac{\sqrt{\gamma-1}\,\Delta^{2}}{2\gamma\,\sqrt{\Delta^{2}+4\gamma}}\convergeInP0.
\]
By the symmetry $n_1=n_2$, the class-2 argument $A_2$ has the same asymptotic limit, and since $\Phi$ is Lipschitz continuous,
\[
R_{\mathrm{LDA}}-\Phi\!\left(-\frac{\sqrt{\gamma-1}\,\Delta^{2}}{2\gamma\,\sqrt{\Delta^{2}+4\gamma}}\right)\convergeInP0.
\]
This completes the proof of Theorem~\ref{Thm2}.
\end{proof}

\section{Error Rate under the Normal Scenario}\label{appendix:B}

The error rate under the normal assumption for the under-parameterized region \citep[Theorem~1 of][]{Wang2018} is visualized in \figref{fig:5.2}.
\begin{figure}[!htb]
\begin{subfigure}{0.495\textwidth}
\includegraphics[width=\textwidth]{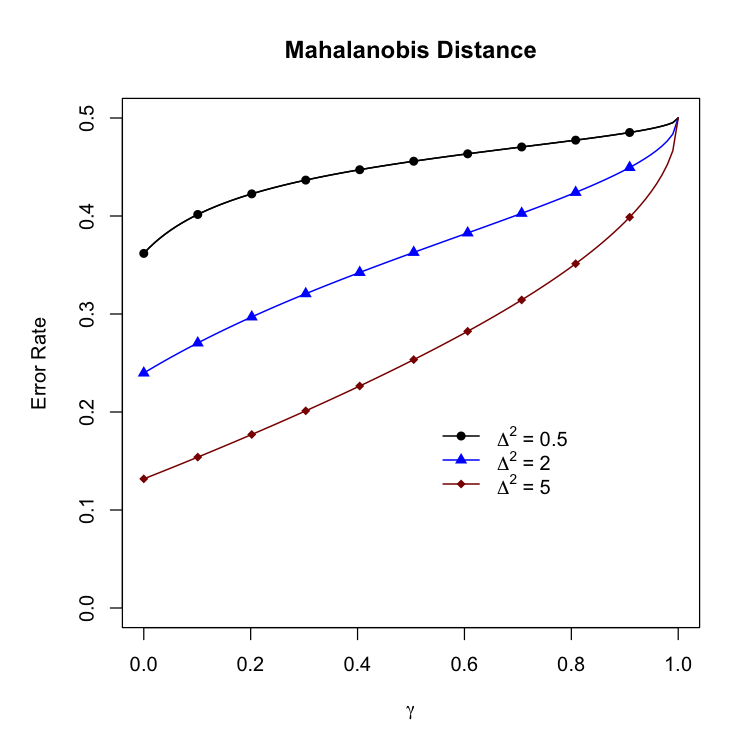}
\caption{Relationship of $\gamma$ and error for varying levels of squared Mahalanobis Distance $\Delta^2$}\label{fig:5.21}
\end{subfigure}
\hfill
\begin{subfigure}{0.495\textwidth}
\includegraphics[width=\textwidth]{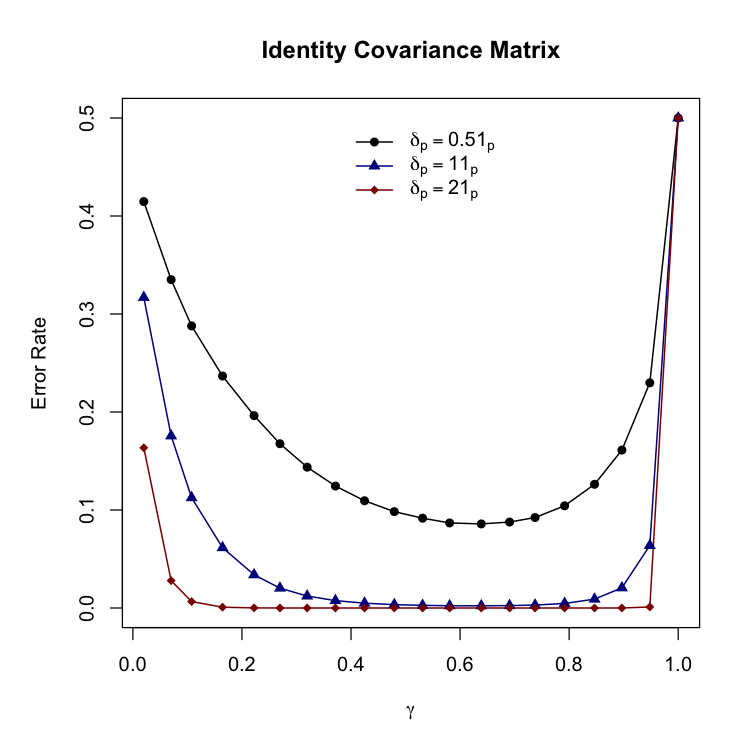}
\caption{Identity covariance matrix with varying population group mean differences. $\boldsymbol{\delta}_p = \boldsymbol{\mu}_1 - \boldsymbol{\mu}_2, \boldsymbol{\Sigma} = \boldsymbol{I}$}
\label{fig:5.22}
\end{subfigure}
\medskip
\begin{subfigure}{0.495\textwidth}
\includegraphics[width=\textwidth]{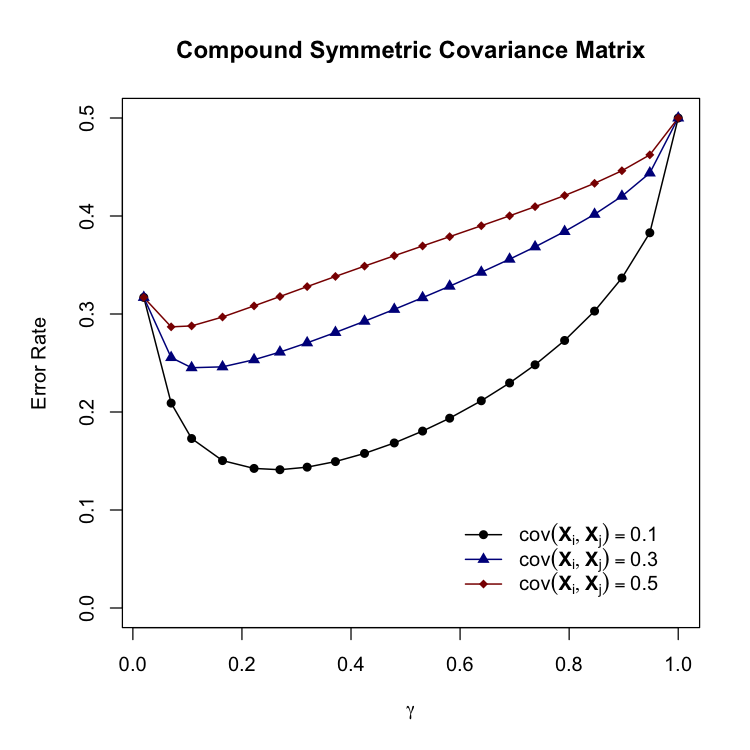}
\caption{Compound symmetric covariance matrix with different covariances}
\label{fig:5.23}
\end{subfigure}
\hfill
\begin{subfigure}{0.495\textwidth}
\includegraphics[width=\textwidth]{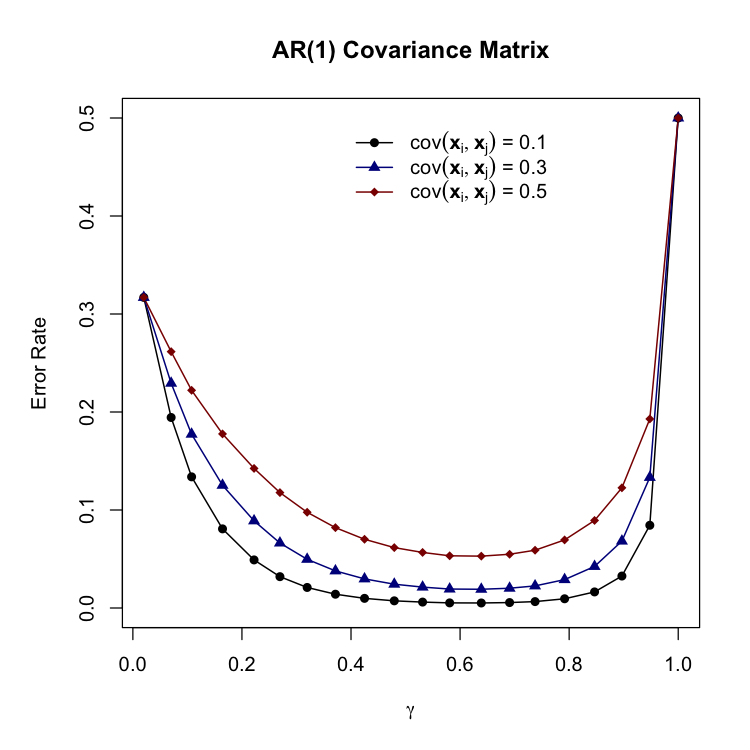}
\caption{AR(1) Covariance matrices with different covariances}
\label{fig:5.24}
\end{subfigure}
\caption{Theoretical model error in under-parameterized regime}
\label{fig:5.2}
\end{figure}

\newpage
\singlespacing
\bibliographystyle{agsm}
\bibliography{References.bib}

\end{document}